\definecolor{lightblue}{HTML}{00F9DE}
\tikzstyle{block} = [draw, fill=white, rectangle, 
\numberwithin{equation}{section}
\newtheorem{theorem}{Theorem}[section]
\newtheorem{proposition}[theorem]{PROPOSITION}
\theoremstyle{remark}
\newtheorem{definition}[theorem]{Definition}
\newtheorem*{example}{Example}
\newtheorem{assumption}[theorem]{ASSUMPTION}
\DeclareFontFamily{OT1}{pzc}{}
\DeclareFontShape{OT1}{pzc}{m}{it}{<-> s * [1.10] pzcmi7t}{}
\DeclareMathAlphabet{\mathpzc}{OT1}{pzc}{m}{it}
\title{Causal chain event graphs for remedial maintenance}
\date{} 					
\author{{\hspace{1mm}Xuewen Yu}\thanks{Xuewen Yu was funded by Engineering and Physical Sciences Research Council (\emph{EPSRC}), with grant number EP/L016710/1, and the \emph{Statistics Department of the University of Warwick}.} \\
	Department of Statistics, University of Warwick\\
	MRC Biostatistics Unit, University of Cambridge\\
	\texttt{xuewen.yu@mrc-bsu.cam.ac.uk} \\
	\And
	{\hspace{1mm}Jim Q. Smith} \thanks{Professor Jim Q. Smith is supported by the
	\emph{Alan Turing Institute} and \emph{EPSRC} with grant number \emph{EP/K039628/1}.}\\
	Department of Statistics, University of Warwick\\
	The Alan Turing Institute\\
	\texttt{j.q.smith@warwick.ac.uk} \\
}
\begin{document}
\maketitle

\begin{abstract}
The analysis of system reliability has often benefited from graphical tools such as fault trees and Bayesian networks. In this article, instead of conventional graphical tools, we apply a probabilistic graphical model called the chain event graph (CEG) to represent the failures and processes of deterioration of a system. The CEG is derived from an event tree and can flexibly represent the unfolding of asymmetric processes. For this application we need to define a new class of formal intervention we call remedial to model causal effects of remedial maintenance. This fixes the root causes of a failure and returns the status of the system to as good as new. We demonstrate that the semantics of the CEG are rich enough to express this novel type of intervention. Furthermore through the bespoke causal algebras the CEG provides a transparent framework with which guide and express the rationale behind predictive inferences about the effects of various different types of remedial intervention. A back-door theorem is adapted to apply to these interventions to help discover when a system is only partially observed.
\end{abstract}


\keywords{Artificial intelligence\and Chain Event Graphs\and Causal identification\and Reliability analysis}
\section{Introduction}

Conventional graphical tools in system reliability include fault trees (FTs), Boolean decision diagrams (BDDs). An FT is a structured top-down logic diagram starting with the critical system event and decomposing successively into events whose composition or intersection can cause the top event \citep{Bedford2001}. A BDD can provide an equivalent graphical representation to an FT, where events are ordered in the same way as an event tree. Neither of these diagrams provides explicit partial (temporal) order of events nor standard statistical modeling methodologies associated with uncertainty handling and causal reasoning can be seamlessly embedded. Early work \citep{torres1998bayesian,pasquini1999comparing,cai2018application} also criticised traditional reliability analysis, stating that the deficiency of these models lies in the limitation of modelling the uncertain dependencies between failures and complex systems and suggested instead the use of a Bayesian network (BN). Both BNs and chain event graphs (CEGs) enjoy the flexibility of embedding probabilistic knowledge, managing probability propagation, inference, and performing causal analysis. Thus, these two classes of models can be used to inform decision makers or engineers about the potential effects of new policies or actions, enabling optimisation of the maintenance strategy in an efficient and effective way – making probabilistic graphical tools more appealing for reliability or risk analysis than traditional tools. 

Despite the popularity of the BN framework for exploring causal relationships, many researchers \citep{Shafer1996,Spirtes1993,Riccomagno2005} have argued that event tree based inference provides an even more flexible and expressive graph from which to explore causal relationships. Within artificial intelligence, methods based on probability trees are now widely used for various types of causal modeling to support decision and risk analyses in many different domains, see \textit{e.g.} causal discovery, decision making, and risk analysis \citep{Bunnin2021, Genewein2020, Zhao2017,Bedford2001}. 

Within the domain of reliability where the focus of inference is on explaining and repairing failure incidents in a system, the use of trees and their derivative depictions such as CEGs \citep{Riccomagno2005,Collazo2017,Collazo2018} provide a complementary method to the use of fault trees. Here a collection of paths on these graphs are used to explain the unfolding of events that might have led to the fault. One advantage of the CEG is that unlike the BN the asymmetric unfoldings of the process can be directly represented by its topology so that context-specific causal dependencies can be read from the tree \citep{Barclay2013,SmithAnderson2008,Anderson2006}. This is extremely useful for encoding explicitly the failure processes and deteriorating processes of machines. 

 \cite{Robins1986} demonstrated and \cite{Shafer1996} has long argued that causal assumptions are often easily inferred from tree-like structures because these represent explicitly the hypothesised time orderings of events intrinsic to many causal conjectures. A causal analysis can be performed around a framework of the CEG in much the same way as for the BN. However, although the BN has been successfully applied to support the causal analysis of various problems in reliability \citep{Fenton2018}, to our knowledge the more flexible framework of the CEG has yet to be applied to this domain. We show in this paper that the CEG is a much more expressive graphical representation than the BN for putative causes \citep{Anderson2006,Cowell2014,Thwaites2013,Collazo2018} and it can embed the sorts of asymmetries met in reliability models yet to be exploited. 

Previous work \citep{Thwaites2010,Thwaites2013} has proposed a generic method to translate Pearl's \textit{do}-calculus \citep{Pearl1995,Pearl2009} onto the CEGs. The atomic intervention on the BN that forces a variable to take a single value can be simply imported into the CEG where a singular manipulation on a causal BN corresponding to forcing multiple \textbf{edges} along the equivalent causal CEG to take a conditional probability one and others zero. For such conventional types of manipulations it has been discovered when these and more nuanced interventions can be identified. In particular, \cite{Thwaites2008,Thwaites2013} and \cite{Thwaites2010} formulated the back-door theorem and the front-door theorem on the CEG, analogous to what Pearl \citep{Galles1995,Pearl1995,Pearl2009} designed on the BNs. These support predictive models of how certain natural events might trigger failures until such time that they are remedied.

Causal reasoning about interventions to study the reliability of a system can inform the policy makers or the engineers about the potential effects of new policies or actions so that the maintenance strategy can be optimised in an efficient and effective way. However for models of system reliability, we may encounter complicated forms of causal mechanisms and unfamiliar types of intervention not usually studied in standard causal analyses when failure events automatically trigger remedial maintenance. For example when such remedial acts involve the replacement of failed components, they will behave as if starting from new rather than starting from a point of embedded usage as they would be in a more conventional causal intervention. The latter would be what we would need to assume were we to use conventional algebras. But perhaps even more important is that remedial interventions we study here are always designed to rectify a subset of \textbf{root causes} of a failure incident. So interventions associated with remedial maintenance are in practice nearly always very specific types of non-atomic (not singular) interventions. We demonstrate that the manipulations associated with such interventions are then likely not to be represented within the class of vanilla interventions considered in causal BNs -- they are far too symmetric. 

\cite{Yu2020} gave a brief introduction of different types of remedies. Here, we formalise these ideas and provide a detailed methodology which imports the concept of remedy and a root cause analysis into the CEG framework. In this way we are about to establish new causal algebras for the different remedial intervention regimes on CEGs. We show how to use CEGs to determine when and if so how we can measure probabilistically the effectiveness of remedies imposed by engineers to perceived faults.

The contributions of this article are three-fold. Firstly, we propose a novel approach for causal analysis which is applicable to reliability data. 
We formalise the concept of the stochastic manipulation on the CEG and develop the mathematical formulae to import it into the CEG. Moreover, we show the causal effects of stochastic manipulations can be identified on CEGs using the adapted back-door theorem. Both graphical criteria and proofs are given to support this new theorem. Thirdly, we define a new type of intervention -- the remedial intervention -- on the CEG for analysing system reliability. In particular, we emphasise useful causal concepts like ``remedy'' and ``root cause'' in reliability and translate them into algebras via CEGs to embellish the standard causal analysis.

In the next section we will show how to construct a CEG for analysing system reliability. Then using this framework to formally define a remedial intervention in Section \ref{sec3}. In Section \ref{sec4} we apply this definition to prove a number of results about whether or not the probabilistic effects of a given intervention are identifiable from information commonly available to an engineer. Section \ref{sec5} will use a simulated dataset to demonstrate how to perform a causal analysis using the techniques proposed in previous sections. 


\section{Constructing CEGs for reliability analysis}\label{sec2}
In this section, we will briefly review the elicitation process of CEGs from event trees and introduce new concepts for constructing a CEG for reliability data.

\subsection{A review of CEGs}
Consider a finite event tree $\mathcal{T}=(V_{\mathcal{T}},E_{\mathcal{T}})$ defined with vertex set $V_{\mathcal{T}}$ and edge set $E_{\mathcal{T}}$ \citep{SmithAnderson2008,Thwaites2008,Collazo2018,Smith2010,Gorgen2018}. Let $e_{v,v'}\in E_{\mathcal{T}}$ denote the directed edge emanating from the vertex $v$ and pointing to the vertex $v'$. For any vertex $v\in V_{\mathcal{T}}$, denote the set of its parents by $pa(v)=\{v'\in V_{\mathcal{T}}:e_{v',v}\in E_{\mathcal{T}}\}$, and the set of its children by $ch(v)=\{v'\in V_{\mathcal{T}}:e_{v,v'}\in E_{\mathcal{T}}\}$. The vertex without parents in $\mathcal{T}$ is called the \textbf{root} vertex of the tree, denoted by $v_0$. The set of vertices without children are the \textbf{leaves} of the tree, denoted by $L_{\mathcal{T}}\subset V_{\mathcal{T}}$. The non-leaf vertices are called the \textbf{situations} here, denoted by $S_{\mathcal{T}}=V_{\mathcal{T}}\setminus{L_{\mathcal{T}}}$. A \textbf{floret} $\mathcal{F}(v)=\{V_{\mathcal{F}(v)},E_{\mathcal{F}(v)}\}$ of a situation $v\in V_{\mathcal{T}}$ is a subtree of $\mathcal{T}$, whose vertex set $V_{\mathcal{F}(v)}$ consists of $v$ and $ch(v)$, and whose edge set is $E_{\mathcal{F}(v)}=\{e_{v,v'}\in E_{\mathcal{T}}:v'\in ch(v)\}$. A path starting from $v_0$ and terminating in $v\in L_{\mathcal{T}}$ is called a \textbf{root-to-leaf path} on the tree. Let $\Lambda_{\mathcal{T}}$ denote the collection of all the root-to-leaf paths of $\mathcal{T}$. 

If each edge $e_{v,v'}\in E_{\mathcal{T}}$ has an associated transition probability, denoted by $\theta_{v,v'}$, such that $\sum_{v'\in ch(v)}\theta_{v,v'}=1$ and $\theta_{v,v'}\in (0,1)$, then a \textbf{probability tree} with structure $\mathcal{T}$ and set of probabilities $\theta_{\mathcal{T}}=\{\bm{\theta}_{v}\}_{v\in S_{\mathcal{T}}}$ can be well-defined. The probability vector  $\bm{\theta}_v=(\theta_{v,v'})_{v'\in ch(v)}$ is defined for each situation. Then the probability of traversing along any path can be evaluated. Let $\pi(\cdot)$ represent the path related probability on the tree and $\theta_{v,v'}=\pi(v'|v)$.

A \textbf{staged tree} \citep{Collazo2018} is a coloured probability tree $(\mathcal{T},\bm{\theta}_{\mathcal{T}})$ where different colour represent different \textbf{stages}. Two situations $v,v'$ are in the same stage if the probability distributions over the set of edges $E(v),E(v')$ are the same. Let $U_{\mathcal{T}}=\{u_0,\cdots,u_n\}$ denote the set of stages of $\mathcal{T}$. Given a staged tree, two situations $v,v'$ in the same stage are in the same \textbf{position} if and only if $\mathcal{T}(v)$ and $\mathcal{T}(v')$ are isomorphic, \textit{i.e.} having same set of edges, vertices, and colouring. Let $W_{\mathcal{T}}=\{w_0,\cdots,w_m\}$ denote the set of positions. All the leaves in the staged tree belongs a sinking status, denoted by $w_{\infty}$. A \textbf{chain event graph} (CEG) can then be constructed by merging the vertices that belong to the same position and vertices in the sinking status and the corresponding edges that share the same colour. Let  $\mathcal{C}=(V_{\mathcal{C}},E_{\mathcal{C}})$ denote the graphical representation of the CEG. The vertex set is $V_{\mathcal{C}}=W_{\mathcal{T}}\cup w_{\infty}$. If edges $e_{v,v'}$,$e'_{w,w'}\in E_{\mathcal{T}}$ and $v,w$ are in the same position, then there exist corresponding edges $f, f'\in E_{\mathcal{C}}$. If also $v',w'$ are in the same position, then $f=f'$. The positions and edges retain their corresponding colour from the staged tree. The probabilities $\theta_f\in\bm{\theta}_{\mathcal{C}}$ of edge $f\in E_{\mathcal{C}}$ in the new graph are the same as the transition probabilities of the corresponding edges in the staged tree. Then a CEG is defined as $(\mathcal{C},\bm{\theta}_{\mathcal{C}})$. Let $w_0\in W_{\mathcal{T}}$ denote the root node of the CEG. The path starting from $w_0$ and terminating in $w_{\infty}$ is the \textbf{root-to-sink path}. Let $\Lambda_{\mathcal{C}}$ denote the collection of all the root-to-sink paths on $\mathcal{C}$.

\subsection{A CEG for reliability analysis}
An event tree $\mathcal{T}$ is next constructed to represent how system may fail. The sequence of events which would have occurred prior to maintenance are explicitly represented on each of the root-to-leaf path. Call the labels of events on edges by \textbf{d-events} and denote it by $X_{\mathcal{T}}=\bigcup_{e\in E_{\mathcal{T}}}x(e)$. Leaves represent the final status of the system, states are labelled by \texttt{fail} or \texttt{not fail}. We add restrictions to the definition of stages: $v$ and $v'$ are in the same stage if and only if $E(v)$ and $E(v')$ represent the same set of d-events and  $\theta_{v,v'}=\theta_{v^*,v^{*'}}$ if $x(e_{v,v'})=x(e_{v^*,v^{*'}})$. Unusually for this application of CEGs it is useful to define two sink nodes for $\mathcal{C}$: if $v\in L_{\mathcal{T}}$ represents a failed condition, then $v\in w_{\infty}^f$, otherwise $v\in w_{\infty}^{n}$ represents an operational but worn-out condition. We call these the \textbf{failure} and the \text{working sink nodes} respectively. Thus the vertex set of the CEG is $V_{\mathcal{C}}=W_{\mathcal{T}}\cup w_{\infty}^f\cup w_{\infty}^n$. For any path $\lambda\in\Lambda_{\mathcal{C}}$ that ends in the sink $w^f_{\infty}$, we call a \textbf{failure path} and represents a possible pathway to fail. All other paths -- those which terminate in $w^n_{\infty}$ are called \textbf{deteriorating paths}. Let $\Lambda_{\mathcal{C}}^f$ and $\Lambda_{\mathcal{C}}^n$ denote the sets of failure paths and deteriorating paths respectively so that $\Lambda_{\mathcal{C}}=\Lambda_{\mathcal{C}}^f\cup\Lambda_{\mathcal{C}}^n$ and $\Lambda_{\mathcal{C}}^f\cap\Lambda_{\mathcal{C}}^n=\emptyset$.

CEGs have the advantage over BNs to explicitly expressing within its topology the pathway to failure. More explicitly the chronological development of the failure or deteriorating processes can be captured by the root-to-sink paths. We can order the d-events by beginning with root causes, followed by the cascading events initiated by the root causes, \textit{i.e.} primary faults and secondary faults and so on, and ending with a failure or worn-out condition. Because a cause will always happen before an effect the order of these cascading events, expressed in the CEG, embodies the full causal story about what happens if a unit passes along one of its paths. In particular this enables us to infer the nature of a root cause. In this way we can use the topology of the CEG needed to examine the efffects of a given remedial act in fixing the root cause of a failure.

To demonstrate how such a CEG analysis begins, consider an example of failures associated with bushing systems depicted in Figure \ref{fig:ceg}. Bushings are components in a transformer for insulation. We constructed an event tree according to the investigation report by \cite{Al2017}. The tree starts with classification of causes that may lead to system failures. This is followed by depicting the potential causes, then the symptoms that might arise from these. The last component represented on the tree is the failure indicator. 
A staged tree next colours the vertices of $\mathcal{T}$ to embed various context-specific conditional independences that an engineer might bring to the study. For example, here we assume that when the environment outside the machine impacts the system negatively, system failure is conditionally independent of the exact exogenous environment. This assumption makes situations $v_7$ and $v_8$ in the same stage. The stages in Figure \ref{fig:staged tree} are $u_0=\{v_0\}, u_1=\{v_1\},u_2=\{v_2\},u_3=\{v_3,v_4\},u_4=\{v_5\},u_5=\{v_6\},u_6=\{v_7,v_8,v_{13},v_{14},v_{15},v_{16}\},u_7=\{v_9,v_{11}\},u_6=\{v_{10},v_{12}\}.$ Figure \ref{fig:ceg} presents the CEG derived from this staged tree. The positions are $w_0=\{v_0\}$, $w_1=\{v_1\}$, $w_2=\{v_2\}$, $w_3=\{v_3,v_4\},w_4=\{v_5\},w_5=\{v_6\}$, $w_6=\{v_9,v_{11}\},w_7=\{v_{10},v_{12}\}, w_8=\{v_7,v_8,v_{13},v_{14},v_{15},v_{16}\}$.

\begin{figure}
    \centering
    \begin{subfigure}[b]{0.48\textwidth}
        \includegraphics[width=\textwidth]{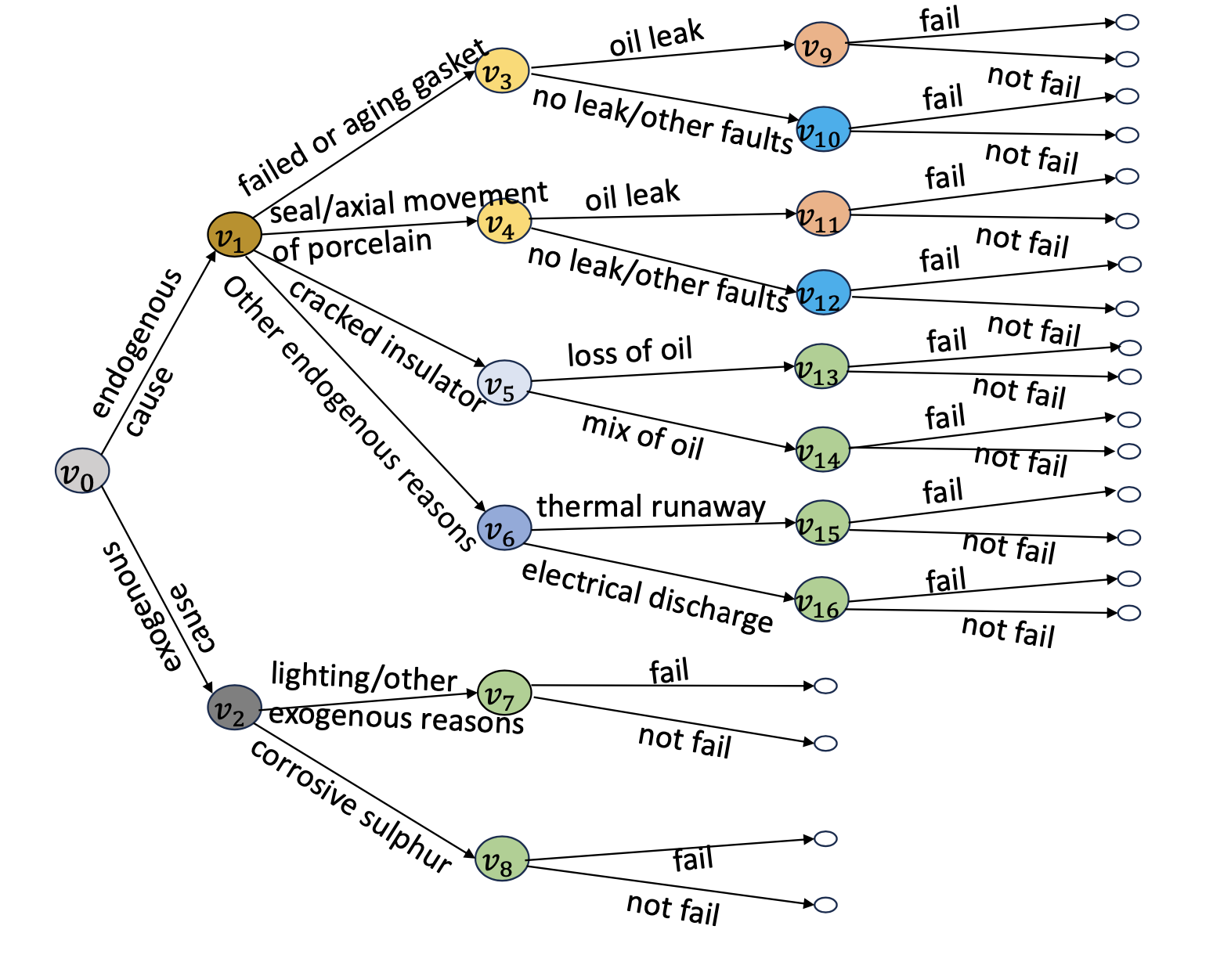}
    \caption{The hypothesised staged tree.}
    \label{fig:staged tree}
    \end{subfigure}
    \begin{subfigure}[b]{0.48\textwidth}
        \includegraphics[width=\textwidth]{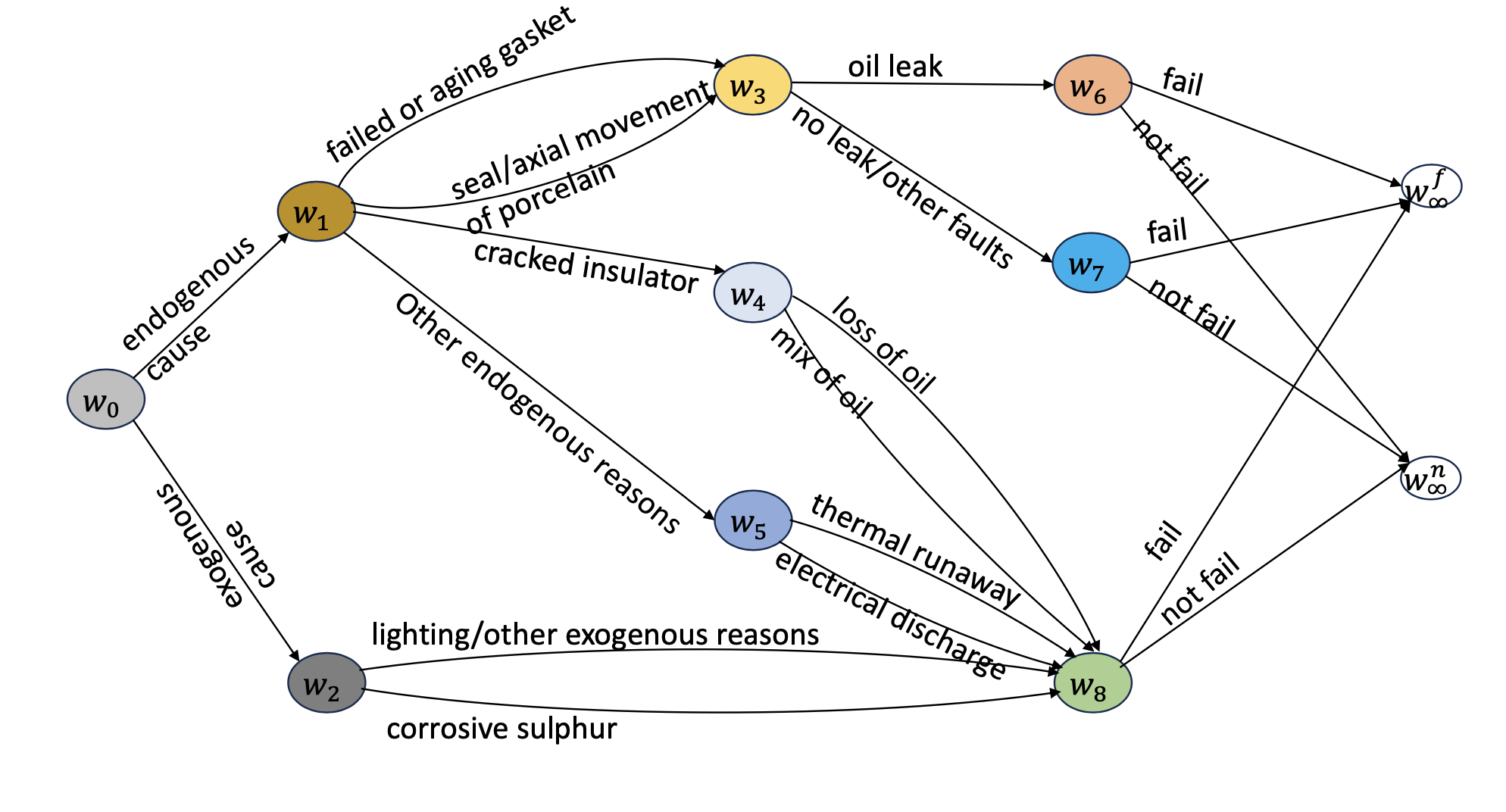}
    \caption{The CEG derived from (a).}
    \label{fig:ceg}
    \end{subfigure}
    \caption{An example of the staged tree and the CEG derived for a bushing system.}
    \label{fig:bushing trees}
\end{figure}

In practice, we can construct an event tree for a system using domain knowledge, and elicit the CEG from it based on expert judgement or making appropriate assumptions. If there are data available for this system, we can apply the structural learning algorithm \citep{Cowell2014,Collazo2018} to find the topology of the staged tree that best describes the data and derive the CEG accordingly: just as we would if modeling with a BN.  

We can now perform causal analysis on a CEG through extending Pearl's \textit{do-}operation \citep{Pearl2009}. This is defined as the \textbf{singular manipulation} \citep{Thwaites2008,Thwaites2013,Thwaites2010}.  Thwaites and others \citep{Thwaites2010,Barclay2014} formalised a \textbf{causal CEG} as follows: under a singular manipulation at a position $w$ so that the event represented on $e_{w,w^*}$ is controlled, the CEG is a manipulated CEG with $\theta_{w,w^*}=1$, $\theta_{w,w'}=0$ for $w'\in ch(w),w'\neq w^*$,
and all other $\bm{\theta}_{w''}$, $w''\in W_{\mathcal{T}},w''\neq w,$ are unchanged. Notice that because the different root-to-leaf paths of the underlying event tree are expressed explicitly within the CEG, it is possible to express explicitly within its topology a much wider range of interventions than would ever be possible just using a BN. It is this property which enables us to develop a transparent causal algebra which is particularly suited to support the study of the causes of failure in system reliability.


\section{Causal algebras for the remedial intervention}\label{sec3}

The term ``\textbf{remedial work}'' is ubiquitous to many types of engineering reports which record the maintenance of some defects or failures. Further ``\textbf{remedy}'' is a more familiar terminology in reliability engineering than ``treatment'' \citep{rubin2003}, which is commonly adopted in medical science and has a subtle different meaning. A unit must have been failed before a remedy is applied. The remedy aims to find and fix the root cause of the observed failure in order to prevent the same defect or failure reoccurring. In contrast a treatment can be applied irrespective of the state of the unit. Furthermore whilst a remedy could be a single act it is often a combination of acts taken in sequence. So the application of causal analyses are rather different in system reliability than in medicine and public health where the majority of causal analyses have traditionally been applied.  

In light of the two essential concepts, \textit{i.e.} the remedy and the root causes, we define a novel domain-specific intervention and call it the \textbf{remedial intervention}. This is a typical external intervention customised to different types of remedies. The inferential framework of the remedial intervention focuses on the discovery of root causes of a fault and the identification of a sequence of actions that will provide a remedy to that fault. Analogously to the root cause analysis, this process can be used to understand and prevent defects of a system by tracing and correcting the initial contributing factor of these defects. Here we develop bespoke causal algebras on CEGs where remedial maintenance takes centre stage. Such new algebras extend the singular manipulation which are now well established for CEGs \citep{Thwaites2010,Thwaites2013}.

\subsection{Perfect, imperfect and uncertain remedial interventions}

For a repairable system, there are three main categories of maintenance: \textbf{perfect maintenance}, \textbf{imperfect maintenance} and \textbf{minimal maintenance} \citep{Iung2005, Borgia2009}. If the status of the system after maintenance is the same as new and has the same failure rate, then the maintenance is \textbf{perfect} and the post-maintenance status is called \textbf{as good as new} (AGAN) \citep{Iung2005,Borgia2009,Bedford2001}. If the status of the system after the maintenance returns to the working order just prior to failure, then the maintenance is \textbf{minimal}, and the post-maintenance status is called \textbf{as bad as old} (ABAO) \citep{Iung2005,Borgia2009}. If the status of the system after maintenance is somewhere between ABAO and AGAN, then the maintenance is classified to be \textbf{imperfect} \citep{Iung2005}. 

To reflect these standard categories of maintenance, we accordingly give definitions to three types of remedial interventions: the \textbf{perfect remedial intervention}, the \textbf{imperfect remedial intervention}, and the \textbf{uncertain remedial intervention}. Here we use the name ``uncertain'' instead of ``minimal'' because the causal algebras we develop later in this article concern about quantifying the characteristics of the uncertainty associated with this type of intervention.

We first make the following two assumptions before formalising the concept of a remedial interventions.
\begin{assumption}\label{assumption:system}
The idle CEG\footnote{The idle CEG is the CEG that represents a system which is not intervened or manipulated.} or the event tree is faithfully constructed with respect to the domain knowledge of a particular system so that every failure process or deteriorating process that may happen in this system can be identified on the tree and every root cause and symptom are well-captured by the semantics of the tree.
\end{assumption}

\begin{assumption}\label{assumption:root cause}
  The system modelled by the CEG is repairable, and the AGAN status is reached when the root cause of the failure is completely fixed.
\end{assumption}

\begin{figure}
    \centering
    \begin{subfigure}[b]{0.32\textwidth}
        \includegraphics[width=\textwidth]{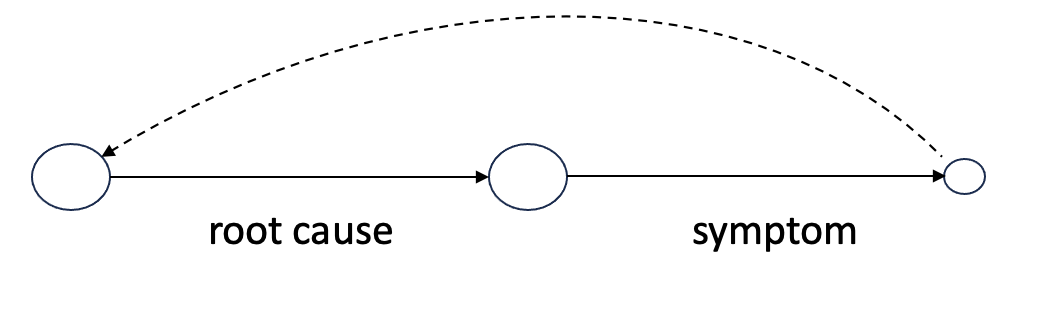}
    \caption{The perfect remedy.}
        \label{fig:remedy1}
    \end{subfigure}
    \begin{subfigure}[b]{0.32\textwidth}
        \includegraphics[width=\textwidth]{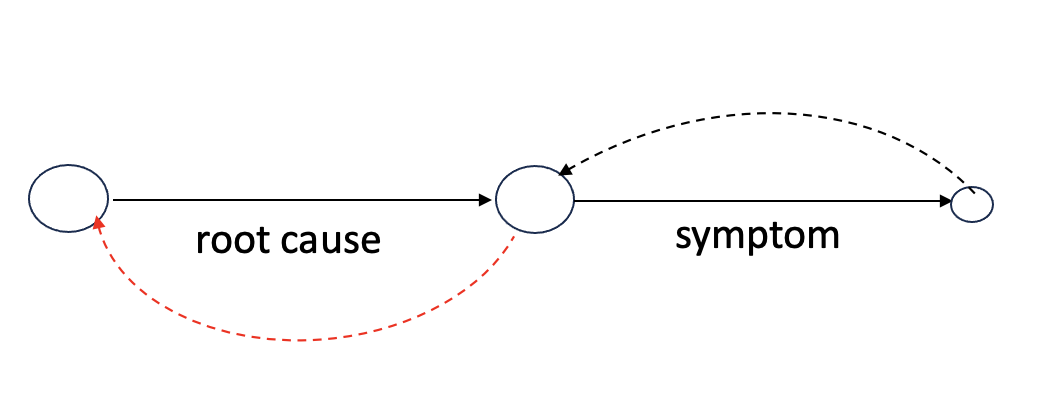}
    \caption{The imperfect remedy.}
        \label{fig:remedy2}
    \end{subfigure}
    \begin{subfigure}[b]{0.33\textwidth}
        \includegraphics[width=\textwidth]{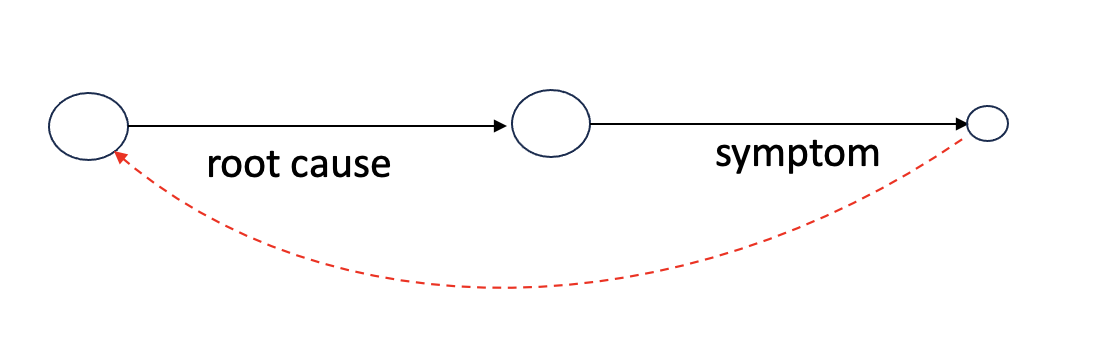}
    \caption{The uncertain remedy.}
        \label{fig:remedy3}
    \end{subfigure}
    \caption{The status monitors for the three types of remedies.}
    \label{fig:remedy}
\end{figure}

For illustrative purpose, we create a new graphical framework integrating a failure process with the process of maintenance in order to demonstrate the differences between various types of the remedial intervention, see Figure \ref{fig:remedy}. 
 
 Here we simplify the root-to-sink path by labelling only the root cause and the symptom. Take a failure process as an example. The root vertex of this path represents an AGAN status while the sink vertex of this path represents a failed condition. We call the root vertex the AGAN vertex and the leaf vertex the fail vertex. The failure path is connected by the solid black edges. The \textbf{recovery path} is defined to be the directed dashed path rooting from the fail vertex and sinking in the AGAN vertex. It models the status change of the system caused by the maintenance. The black and red dashed edges are associated with observed and unobserved maintenance respectively. Recovery paths are external to the idle CEG. This is because it represents the effect of external intervention on status of the equipment and such recovery process is not part of the description of the original system before any intervention has taken place.
 
 A remedial intervention is \textbf{perfect} if the root cause of the failure is correctly identified and successfully fixed by the observed maintenance so that the post-intervention status of the part being maintained is AGAN \citep{Yu2020,Yu2021}. The recovery process is demonstrated in Figure \ref{fig:remedy1}. The recovery path starts from the fail vertex and ends in the AGAN vertex which means the observed maintenance returns the status of the failed part to full working order. Suppose the CEG in Figure \ref{fig:ceg} failthfully models the unmanipulated bushing system, and we observe a failed bushing whose failure was caused by a cracked insulator. Then an example of a perfect remedial intervention is that the engineer replaced the cracked insulator by a new one. 

If the root cause is not remedied but only a subset of the secondary or intermediate faults are remedied, then after the intervention the status of the repaired component will not return to AGAN. However it is better than ABAO. We call such an intervention an \textbf{imperfect remedial intervention}. We can visualise the status change of the maintained equipment from Figure \ref{fig:remedy2}. The recovery path consists of a black dashed edge and a red dashed edge. The black dashed edge points from the fail vertex to the interior vertex of the failure path, which means the status of the equipment is improved but not AGAN after maintenance. In order to fully restore the system, additional maintenance is needed. If imperfect remedial work has been made at time $t$, then the maintenance log will record only that maintenance has happened. As for what is further needed to fully restore the system is unknown at that time. This brings uncertainty into this type of remedial intervention. The recovery process corresponding to the additional remedial work is represented by the red dashed edge, which points from the interior vertex to the AGAN vertex.  

If the maintenance logs do not record what remedial maintenance was taken, then such intervention is classified as an \textbf{uncertain remedial intervention}. Diagnostic information has not yet been made available so the root cause of the failure cannot be determined. A follow-up check and maintenance will be carried out in order to restore the broken part. Therefore the recovery process of this type of intervention is unobserved and so uncertain. The corresponding recovery path is shown in Figure \ref{fig:remedy3}.

\subsection{Notation and definitions} \label{sec:remedy}
To model remedial intervention, we first introduce some new variables.

Assume that the root causes of a specific defect or failure could be multiple and are well-defined. Note that a remedial intervention is defined to allow multiple root causes to be corrected simultaneously. Such an intervention is of course always non-singular within the CEG representation.

Let $\mathcal{A}$ denote the state space of maintenance events. Let $A^O$ and $A^{U}$ be random variables taking values in $\mathcal{A}$ representing observed maintenance and uncertain maintenance respectively. Let $A=(A^O,A^U)$ denote the vector of all maintenance. The status of the maintained equipment is observable, and represented by a \textbf{status indicator} $\delta$ such that \begin{equation}
    \delta=\begin{cases}
    1, & \text{ if the status is AGAN after maintenance,}\\
    0, & \text{ otherwise.}
    \end{cases}
\end{equation} 
 Let $E^{\Delta}=\{e_{l_1},\cdots,e_{l_n}\}$ denote the set of edges labelled by the d-events associated with root causes. For any edge representing a root cause $e_{l_i}\in E^{\Delta}$, we define a binary variable to indicate whether or not the root cause labelled on $e_{l_i}$ is fixed and call it an \textbf{intervention indicator}. Let 
\begin{equation}
    I_{e_{l_i}} = \begin{cases}
    1, & \text{ if the root cause represented on $I_{e_{l_i}}$ is fixed by the maintenance, }\\
    0, & \text{ otherwise.}
    \end{cases}
\end{equation} Then we have a vector of intervention indicators defined over $E^{\Delta}$, denoted by $\bm{I}_{E^{\Delta}}=\{I_{e_{l_1}},\cdots,I_{e_{l_n}}\}$. Let $\lambda^O\in\Lambda$ be the set of possible root-to-sink paths associated with the failure or deterioration. Note that the whole failure or deteriorating process might be partially observed when the root causes are unknown. Let $\lambda^R\in\Lambda$ denote the actual failure or deteriorating path when root causes are known.

When $\delta=1$, the remedial intervention is perfect. In other words, the root causes are correctly identified and fixed by $A^O$. So the value of the vector of the intervention indicators is known, denote this by $\bm{I}_{A^O}$. Then we have,
\begin{equation}
    p(\bm{I}_{E^{\Delta}}|A,\lambda^O,\delta=1) = \begin{cases}
    1, & \text{$\bm{I}_{E^{\Delta}}=\bm{I}_{A^O}$,}\\
    0, & \text{$\bm{I}_{E^{\Delta}}\neq\bm{I}_{A^O}$.}
    \end{cases} 
\end{equation} The actual paths $\lambda^R$ given $\lambda^O$, $A^O$, $\delta=1$ are deterministic: $\lambda^R=E_{A^O}\cap\lambda^O$, where $e\in E_{A^O}$ if and only if $I_{e}=1$.

When $\delta=0$, the remedial intervention is imperfect or uncertain. The uncertainty arises from the unobserved additional maintenance $A^U$. Then the root causes need to be inferred:
\begin{equation}\label{equ:imperfect or uncertain intervention}
   p(\bm{I}_{E^{\Delta}}|A^O,\lambda^O,\delta=0)=\sum_{a^u\in \mathcal{A}}p(\bm{I}_{E^{\Delta}}|a^O,a^U,\lambda^O,\delta=0)p(a^U|a^O,\lambda^O, \delta=0).
\end{equation}The probability $p(\bm{I}_{E^{\Delta}}|a^O,a^U,\lambda^O,\delta=0)\neq p(\bm{I}_{E^{\Delta}}|a^O,a^U,\lambda^O,\delta=1)$ since the latter is associated with a perfect remedy with degenerate probability distribution while the former is not. The actual path $\lambda^R$ can be inferred in the same way. Here we apply equation \eqref{equ:imperfect or uncertain intervention} for either imperfect or uncertain remedial intervention. This is because both types involve uncertainty in maintenance events, which is represented by the model $p(a^U|a^O,\lambda^O, \delta=0)$. When the intervention is uncertain, $a^O=\emptyset$ denotes an empty set and $a^U$ is informed from the partially observed failure process $\lambda^O$. In practice, we can specify a parametric model $p(a^U|a^O,\lambda^O, \delta=0; \bm{\eta}_{\lambda}^{\mathbb{I}(a^O\neq \emptyset)})$. Then $\bm{\eta}_{\lambda}^{0}$ will denote the set of parameters defined over the set of observable maintenance and the root-to-sink paths for the model under the uncertain remedial intervention, while $\bm{\eta}_{\lambda}^{1}$ will denote the set of parameters defined over the root-to-sink paths for the model under the imperfect remedial intervention.

Thus, given the observed maintenance, we can infer the probabilities associated with the intervention indicator vector as
\begin{equation}\label{equ:infer i}
\begin{split}
p(\bm{I}_{E^{\Delta}}|A^O,\lambda^O)=&p(\bm{I}_{E^{\Delta}}|\delta=1,\lambda^O,A^O)p(\delta=1|\lambda^O,A^O)+\\
&\sum_{a^u\in \mathcal{A}}p(\bm{I}_{E^{\Delta}}|a^O,a^U,\lambda^O,\delta=0)p(a^U|a^O,\lambda^O, \delta=0)p(\delta=0|\lambda^O,A^O).  
\end{split}
\end{equation}Here $p(a^U|a^O,\lambda^O, \delta=0)$ will be $0$ or near $0$ for rare events. Furthermore, under assumption \ref{assumption:system}, for any system of interest, all possible root causes are represented by the tree. Under assumption \ref{assumption:root cause}, equation \eqref{equ:infer i} enables us to identify the corresponding root causes for any remedial intervention. Therefore, equation \eqref{equ:infer i} provides a general form for modelling any remedial intervention.

\subsection{The stochastic manipulation}

Engineers address root causes to prevent the fault or failure caused by these root causes reoccurring. So it is natural to assume that the distribution over root causes are affected by the remedial intervention. We then import this idea to the idle CEG. Denote the set of positions whose emanating edges are labelled by root causes as $W^{\Delta}$. A position $w\in W^{\Delta}$ if $e_{w,w'}\in E^{\Lambda}$. The intervention indicator vector defined for position $w$ is $\bm{I}_{w}=(I_{e_{w,w'}})_{w'\in ch(w)}$. Define the \textbf{intervened position} to be the position whose floret $\mathcal{F}(w)$ is assigned a new probability distribution under an intervention. Let $\bm{w}^*$ denote the set of intervened positions. Then under a remedial intervention, $\bm{w}^*\subseteq W^{\Delta}$. If $I_{e_{w,w'}}=1$ then this means the root cause represented on $e_{w,w'}$ is intervened and $w\in\bm{w}^*$. We next formalise the manipulation of the probability distributions over $\mathcal{F}(\bm{w}^*)$.

\begin{definition}[Stochastic manipulations]\label{def:stochastic}
 A manipulation on a CEG $C$ is called \textbf{stochastic} if there exists a set of positions $\bm{w}^{*}\subseteq W$ such that,
 \begin{enumerate}
     \item  for each $w\in\bm{w}^*$, there is a well-defined map $\Gamma$ updating the transition probabilities vector $\bm{\theta}_{w}=(\theta_{w,w'})_{w'\in ch(w)}$
      \begin{equation} \label{map}
     \Gamma: \bm{\theta}_{w}\mapsto\bm{\hat{\theta}}_{w},
 \end{equation}where $\bm{\hat\theta}_{w}=(\hat{\theta}_{w,w'})_{w'\in ch(w)}$ denotes the post-intervention transition probabilities vector;
 \item the new transition probabilities vector $\bm{\hat\theta}_{w}$ satisfies $\bm{\hat{\theta}}_w\neq\bm{\theta}_{w}$, $\sum_{w'\in ch(w)}\hat{\theta}_{w,w'}=1$ and $\hat{\theta}_{w,w'}\in(0,1)$ for $w'\in ch(w)$;
 \item for a position $w\in W_{\Lambda(\bm{w}^*)}\setminus{\bm{w}^*}$, i.e. a position that lies on any of the paths passing through $\bm{w}^*$ and is not an intervened position, the corresponding transition probabilities vector remains the same as the pre-intervention version: $\bm{\hat{\theta}}_{w}=\bm{\theta}_w$, here $\Lambda(\bm{w}^*)=\cup_{w\in\bm{w}^*}\Lambda(w)$ denote the \textbf{intervened paths};
 \item for a position $w'\in ch(pa(\bm{w}^*))\setminus{\bm{w}^*}$, i.e. one that shares the same parent as $\bm{w}^*$ but which is not an intervened position, $\hat{\theta}_{pa(w'),w'}=0$.
 \end{enumerate}
\end{definition}

The simplest scenario for the manipulated transition probabilities $\bm{\hat{\theta}}_{\bm{w}^*}$ is when the values of $\bm{\hat{\theta}}_{\bm{w}^*}$ are known. But these values may not necessarily be available, in which case we have a more complex scenario where we are required to learn these values from an inferential framework.

We begin with the simplest scenario when $\bm{\hat{\theta}}_{\bm{w}^*}$ are known. In this case, the post-intervention path related probabilities can be evaluated. Let 
$\pi(\cdot)$ and $\hat\pi(\cdot)$ denote the pre- and post-intervention path related probability respectively. Let $E(\bm{w}^*)=\cup_{w\in W} E(w)$ denote the emanating edges of the intervened positions, $E_{\lambda}$ denote the set of edges traversed by the root-to-sink path $\lambda$. Then, for $\lambda\in \Lambda_{\mathcal{C}}$,

\begin{equation}\label{equ:manipulate path prob}
    \hat\pi(\lambda)=\begin{cases}
    \frac{\prod_{e\in E_{\lambda}}\theta_e}{\prod_{e'\in E(\bm{w}^*)\cap E_{\lambda}}\theta_{e'}}\times \prod_{e'\in E(\bm{w}^*)\cap E_{\lambda}}\hat\theta_{e'} & \text{if $\lambda\in\Lambda(\bm{w}^*)$,}\\
    0 & \text{otherwise.}
    \end{cases}
\end{equation}

Given the intervened paths $\Lambda(\bm{w}^*)$, let $\mathcal{\hat{C}}^{\Lambda(\bm{w}^*)}$ denote the topology of the conditioned CEG \citep{Thwaites2013} constructed with respect to $\Lambda(\bm{w}^*)$. Let $\hat{\pi}^{\Lambda(\bm{w}^*)}(\cdot)$ denote the path related probability of $\mathcal{\hat{C}}^{\Lambda(\bm{w}^*)}$, $W_{\lambda}$ denote the set of vertices traversed by $\lambda$. Then, 
\begin{equation}\label{conditional prob}
    \hat{\pi}^{\Lambda(\bm{w}^*)}(\lambda)=\prod_{\substack{w_i,w_j\in W_{\lambda},\\w_i=pa(w_j)}}\hat{\pi}^{\Lambda(\bm{w}^*)}(w_j|w_i)=\prod_{\substack{w_i,w_j\in W_{\lambda},\\w_i=pa(w_j)}}\frac{\sum_{\lambda\in\Lambda(\bm{w}^*)}\hat{\pi}(\lambda,\Lambda(e_{w_i,w_j}))}{\sum_{\lambda\in\Lambda(\bm{w}^*)}\hat{\pi}(\lambda,\Lambda(w_i))}.
\end{equation}

\begin{definition}[The manipulated CEG]\label{def:manipulated ceg} The manipulated CEG of a remedial intervention with respect to $\Lambda(\bm{w}^*)$ has a topology $\hat{\mathcal{C}}=\mathcal{\hat{C}}^{\Lambda(\bm{w}^*)}=(\hat{V},\hat{E})$ and transition probabilities $\bm{\hat{\theta}}^{*}$.
\begin{itemize}
    \item The vertex set is $\hat{V}=W_{\Lambda(\bm{w}^*)}=\cup_{\lambda\in\Lambda(\bm{w}^*)}W_{\lambda}$;
    \item the edge set is $\hat{E}=E_{\Lambda(\bm{w}^*)}=\cup_{\lambda\in\Lambda(\bm{w}^*)}E_{\lambda}$;
    \item the transition probabilities are evaluated via equation (\ref{conditional prob}), equation (\ref{equ:manipulate path prob}) and the specification in Definition \ref{def:stochastic}.
\end{itemize}
\end{definition}

We can now formalise the transformation of an idle CEG to a manipulated CEG under the remedial intervention.  Let $\zeta_{\bm{w}^*}$ denote the map indexed by the intervened positions $\bm{w}^*$ corresponding to this transformation. Then,
\begin{equation}
    \zeta_{\bm{w}^*}:(\mathcal{C},\bm{\theta})\mapsto (\hat{\mathcal{C}},\bm{\hat{\theta}}^{*}).
\end{equation} This map is well-defined if and only if the map $\Gamma$ that updates the transition probabilities, see equation (\ref{map}) in Definition \ref{def:stochastic}, is well-defined. Here the intervened positions are fixed, while as discussed in Section \ref{sec:remedy}, they might not be deterministic given the observations of remedial intervention and partial failure paths and the root causes needs to be inferred through the map $\rho:(A^O,\lambda^O)\mapsto I_{E^{\Delta}}$. This map corresponds to equation \eqref{equ:infer i}. Then $\zeta\circ\Gamma\circ\rho$ allows us to construct the manipulated CEG.

An example of a manipulated CEG is given in Figure \ref{fig:manipulated ceg}. This corresponds to the idle CEG in Figure \ref{fig:ceg} for the bushing system and an imperfect remedial intervention that did not restore the status of the machine to AGAN. Suppose defects in gasket or porcelain lead to the failure. Then the intervened position in $\bm{w}^*=\{w_1\}$ and we can explore the effect of the remedial intervention by stochastically manipulating the distrubiton over $\mathcal{F}(w_1)$. The composition of stages may change when transforming from $\mathcal{C}$ to $\mathcal{\hat{C}}$. The position $w_8$ in the idle system contains a single stage $u_7$. This stage consists of six situations that consists of situations $\{v_7,v_8,v_{13},v_{14},v_{15},v_{16}\}$. While in the manipulated CEG, vertices $v_7,v_8$ are not traversed by any path in $\Lambda(v_1)$ in the event tree, where $w_1=\{u_1\}=\{v_1\}$. The root floret $\mathcal{F}(w_0)$ is associated with the root cause classifier. Here the manipulated CEG is conditioned on $\Lambda(w_1)$, so $\hat{\pi}^{\Lambda(w_1)}(w_1|w_0)=1$ and we only concern about the endogenous causes. In fact, exogenous root causes, such as lightening, are difficult to be remedied.


The manipulated CEG is associated with an intervened model and expresses what might happen had some variables been controlled under some hypothetical intervention. It allows us to identify the effect of some form of controls, \textit{e.g.} fixing a root cause, from the observed data and interpret it causally.

\begin{figure}
    \centering
    \includegraphics[width=0.7\textwidth]{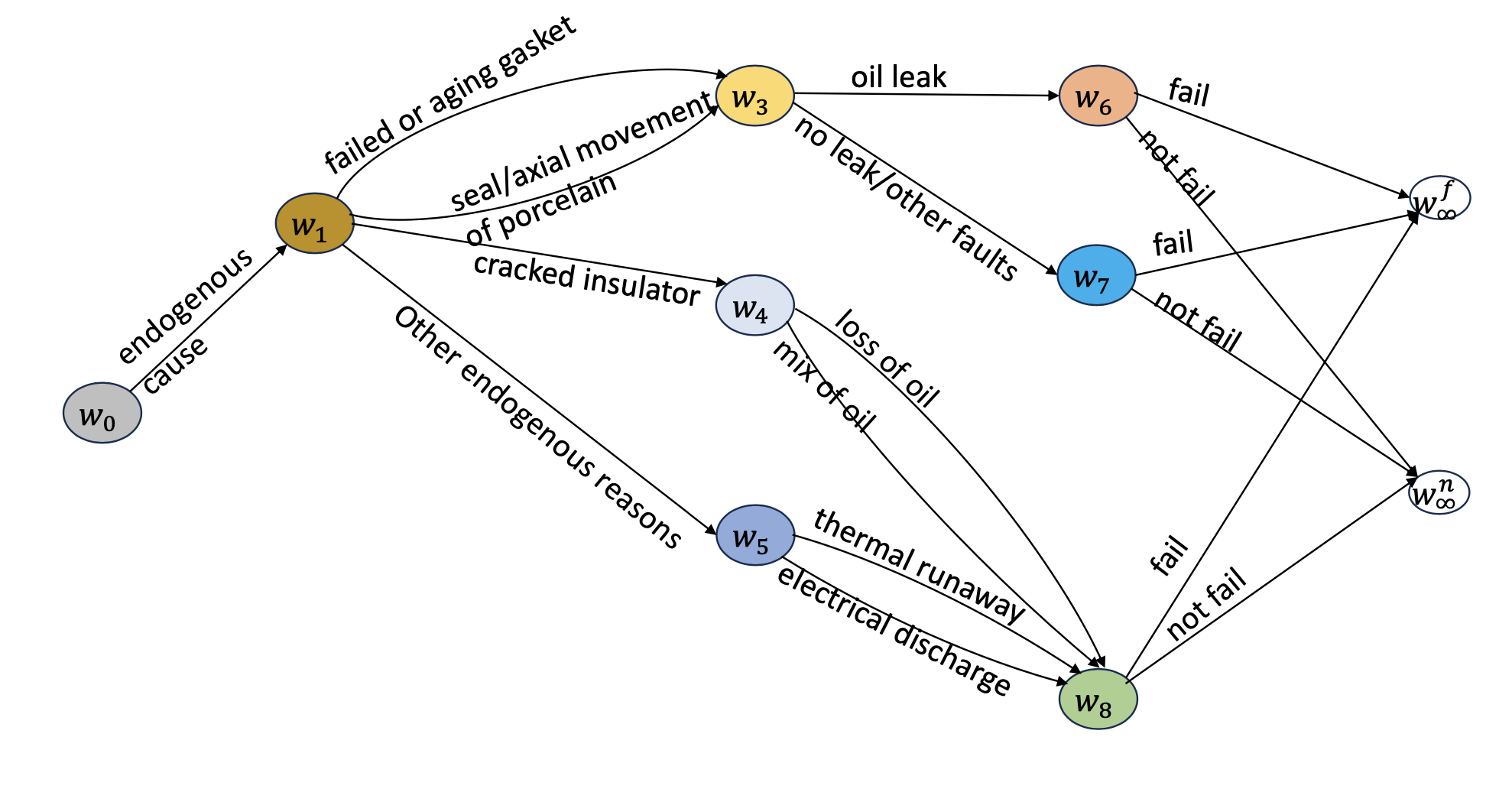}
    \caption{The manipulated CEG.}
    \label{fig:manipulated ceg}
\end{figure}

\subsection{An inferential framework using the causal algebras}

Of course in practice we need to estimate the parameters appearing in the formulae above before preceding to identify the causal effects which will be discussed in the next section. However from a Bayesian perspective this is actually straightforward. \cite{Barclay2013} and \cite{Collazo2018} have established a conjugate analysis on the non-causal CEG which translates seamlessly into this new causal setting. Here we only give an example of how Bayesian predictive inference can be performed. Let $f(\bm{\theta}_{j,w}|\bm{\alpha}_{w})$ denote the prior distribution of $\bm{\theta}_{j,w}$, which is the transition probability vector of position $w$ for individual $j$. Let $\chi_{j,e}$ denote whether d-event on edge $e$ is observed for individual $j$. The parameters of the prior is the vector $\bm{\alpha}_{w}$. Let $\bm{\gamma}_{a,\lambda}$ denote the parameter for the probability distribution of $\bm{I}_{j,E^{\Delta}}|A_j^O,\lambda_{j}^O$, with subscripts $a$ denote the maintenance and $\lambda$ denote the paths. Let $f(\bm{\gamma}_{a,\lambda}|\bm{\beta})$ denote the prior of $\bm{\gamma}_{a,\lambda}$ with prameter vector $\bm{\beta}$. Then the posterior can be written as:
 \begin{equation}\label{equ:posterior}
f(\bm{\theta}|O)\propto\prod_{j=1}^N\sum_{\bm{I}_{E^\Delta}}\prod_{e\in E/E^{\Delta}}\left(\theta_{j,e}^{\chi_{j,e}}\right)\prod_{e\in E^{\Delta}}\left(\theta_{j,e}^{I_{j,e}}p(\bm{I}_{j,E^{\Delta}}|A_j^O,\lambda_j^O,\gamma_{A_j^O,\lambda_j^O})\right)f(\gamma_{A_j^O,\lambda_j^O}|\bm{\beta})\prod_{w\in W}f(\bm{\theta}_{j,w}|\bm{\alpha}_{w}),
 \end{equation} where $O$ denotes the observations. We can sample $\bm{\gamma}_{a,\lambda}$ from the posterior and simulate the root causes through simulating $\bm{I}_{E^{\Delta}}$. Then sampling $\bm{\theta}$ from the posterior and further sampling the post-intervened probabilities by the transformation $\Gamma(\{\bm{\theta}_{w}\}_{w:\bm{I}_{w^{\Delta}}=1})$. Then we can find the predictive distribution over the paths by simulating the paths from the post-intervened transition probabilities. \cite{Yu2020} and \cite{YuEntropy2021} gave examples of implementing Bayesian inference with the customised causal algebras on CEGs for reliability analysis.


\section{Causal identifiability of a remedial intervention on the CEG}\label{sec4}
\subsection{The expression of the causal query}

For variable $X$ represented on the causal BNs, the $do-$operation \citep{Pearl2009} that forces $X$ to take value $x$ is denoted by $do(X=x)$. If we are interested in the effect of this intervention on another variable $Y$, then the causal query to be estimated is $p(y|do(x))$. This $do-$operator corresponds to the singular manipulation on the CEG that forces $\Lambda_x$ to be traversed, where $\Lambda_x$ here is the collection of root-to-sink paths passing along the edges labelled by $x$. Analogously to $p(y|do(x))$, here we identify the effect of forcing $x$ to happen on another event $y$ through estimating the probability $\pi(\Lambda_y||\Lambda_x)$. Here the notation $||$ plays a similar role as the $do-$operator which imposes an intervention onto the tree \citep{Thwaites2013}. So notationally $\pi(\Lambda_y||\Lambda_x)=\pi(\Lambda_y|do(\Lambda_x))$.

We have explained that a remedial intervention imposes a stochastic manipulation on the probability distributions $\bm{\theta}_{\bm{w}^*}$ given the intervened positions $\bm{w}^*$ and the post-intervention transition probabilities assigned to $\mathcal{F}(\bm{w}^*)$ is $\bm{\hat{\theta}}_{\bm{w}^*}$.Given $\bm{\hat{\theta}}_{\bm{w}^*}$, the causal query of a remedial intervention is
\begin{equation}\label{equ:causal1}
    \pi(\Lambda_{y}||\bm{\hat{\theta}}_{\bm{w}^*}).
    \end{equation} 
 Given an intervention $a^O$, we are interested in $\pi(\Lambda_y|do(a^O))$. However, $a^O$ is external to the system represented by the CEG. To identify this causal quantity, we transform the intervention onto the CEG using the formulation explained in the previous section. Recall that the intervened positions are identified through the map $\rho$, the transition probabilities are updated through the map $\Gamma$, and the manipulated CEG is then obtained through $\xi$. Thus, when the remedial intervention is perfect, i.e. $\delta=1$, the causal query can be expressed as
    \begin{equation}
        \pi(\Lambda_{y}||\Gamma(\left\{\bm{\theta}_{w}\right\}_{w:I_{e_{w,w'}}=1})).
    \end{equation}
When the remedial intervention is imperfect or uncertain, i.e. $\delta=0$, we estimate the causal effect
    \begin{equation}
       \sum_{\bm{I}_{E^{\Delta}}}\sum_{a^u\in \mathcal{A}}\pi(\Lambda_{y}||\Gamma(\left\{\bm{\theta}_{w}\right\}_{w:I_{e_{w,w'}}=1}))p(\bm{I}_{E^{\Delta}}|a^O,a^U,\lambda^O,\delta=0)p(a^U|a^O,\lambda^O, \delta=0), 
    \end{equation} where the post-intervention transition probabilities $\hat{\bm{\theta}}_{\bm{w}^*}$ are obtained from observations via $\Gamma\circ\rho$. For either type of remedial intervention, we need to identify $\pi(\Lambda_{y}||\Gamma(\left\{\bm{\theta}_{w}\right\}_{w:I_{e_{w,w'}}=1}))$. In this section, we only focus on the quantity $\pi(\Lambda_y||\hat{\bm{\theta}}_{\bm{w}^*}))$ with a known $\hat{\bm{\theta}}_{\bm{w}^*}=\Gamma(\left\{\bm{\theta}_{w}\right\}_{w:I_{e_{w,w'}}=1})$. Note that given the idle CEG we can construct the manipulated CEG with $\hat{\bm{\theta}}^*$. Based on this knowledge, we next show the effect of the stochastic manipulation $\pi(\Lambda_y||\hat{\bm{\theta}}_{\bm{w}^*}))$ is identifiable given the idle CEG and the observable information.

\subsection{Causal effect identifiability of stochastic manipulations}\label{sec:causal effect identifiability of stochastic manipulations}
 A \textbf{fine cut} \citep{Wilkerson2020} is defined to be a set of vertices $W'\subset W_{\mathcal{T}}$ so that $\cup_{w\in W'}\Lambda(w)=\Lambda_{\mathcal{C}}$. The intervened positions under a remedial intervention are not necessarily a fine cut. If $\bm{w}^*$ is a fine cut of $\mathcal{C}$, then the intervened paths are the set of all root-to-sink paths on the CEG, \textit{i.e.} $\Lambda(\bm{w}^*)=\Lambda_{\mathcal{C}}$. 
When the manipulations are asymmetric or the processes modelled on the idle CEG are asymmetric, $\bm{w}^*$ might not be a fine cut. A CEG conditional on the intervened paths $\Lambda(\bm{w}^*)$ can be constructed. Such a \textbf{conditioned CEG} has structure $\mathcal{C}^{\Lambda(\bm{w}^*)}=(V^*,E^*)$, where $V^*=W_{\Lambda(\bm{w}^*)}\cup w_{\infty}^{f}\cup w_{\infty}^n$ and $E^*=E_{\Lambda(\bm{w}^*)}$. The transition probabilities are $\bm{\theta}^*=\{\bm{\theta}^*_{w}\}_{w\in W_{\Lambda(\bm{w}^*)}}$, where $\theta_{w,w'}^*=\pi^{\Lambda(\bm{w}^*)}(w_j|w_i)$ which is evaluated as:\begin{equation}\label{idle conditional prob}
   \pi^{\Lambda(\bm{w}^*)}(w_j|w_i)=\frac{\sum_{\lambda\in\Lambda(\bm{w}^*)}\pi(\lambda,\Lambda(e_{w_i,w_j}))}{\sum_{\lambda\in\Lambda(\bm{w}^*)}\pi(\lambda,\Lambda(w_i))}.
\end{equation} So the conditioned CEG differs from the manipulated CEG in that it inherits pre-intervention conditional probabilities.

Analogously to the singular intervention \citep{Thwaites2010,Thwaites2013}, to identify the causal effects of a stochastic manipulation on $\bm{\theta}_{\bm{w}^*}$, we need to estimate the probability $\pi(\Lambda_y||\bm{\hat{\theta}}_{\bm{w}^*})$ from the conditioned CEG $(\mathcal{C}^{\Lambda(\bm{w}^*)},\bm{\theta}^*)$.

Let the intervened d-events be $x(E(\bm{w}^*))=\cup_{w\in\bm{w}^*}\cup_{e\in E(w)}x(e)$, which are labels of the edges emanating from the intervened positions.
For remedial interventions, they are a subset of root causes.

Given $x(E(\bm{w}^*))$ and $\bm{\hat{\theta}}_{\bm{w}^*}$, estimating $\pi(\Lambda_y||\bm{\hat{\theta}}_{\bm{w}^*})$ from $(\mathcal{C}^{\Lambda(\bm{w}^*)},\bm{\theta}^*)$ is equivalent to manipulating each d-event $x\in x(E(\bm{w}^*))$ with probability $\pi(\Lambda_x||\bm{\hat{\theta}}_{\bm{w}^*})$. Note that estimating the causal query in this way is standard for causal algebras. For example, \cite{Pearl2009} has suggested estimating the causal effects of a stochastic policy from the an unmanipulated causal BN in a similar way. Following this idea, we can formulate the causal query as follows.
\begin{definition}[Causal effect of a stochastic manipulation]\label{def:causal effects}
Given the intervened d-events $x(E(\bm{w}^*))$ and the conditioned CEG $(\mathcal{C}^{\Lambda(\bm{w}^*)},\bm{\theta}^*)$, the causal effect of the stochastic manipulation of $\bm{\hat{\theta}}_{\bm{w}^*}$ on the d-event $y$ is a function from the stochastically manipulated probability vectors $\bm{\hat{\theta}}_{\bm{w}^*}$ to the space of path related probability distributions on $\Lambda_y$ which can be expressed as:
\begin{equation}\label{equ:simp proof backdoor}
     \pi(\Lambda_y||\bm{\hat{\theta}}_{\bm{w}^*})=\sum_{x\in x(E(\bm{w}^*))}\pi^{\Lambda(\bm{w}^*)}(\Lambda_y||\Lambda_x)\hat{\pi}^{\Lambda(\bm{w}^*)}(\Lambda_x),
\end{equation} where 
\begin{equation}\label{eq:est1}
  \pi(\Lambda_y||\Lambda_x,\bm{\hat{\theta}}_{\bm{w}^*}) = \pi^{\Lambda(\bm{w}^*)}(\Lambda_y||\Lambda_x),
\end{equation}
\begin{equation}\label{eq:est2}
    \pi(\Lambda_x||\bm{\hat{\theta}}_{\bm{w}^*})=\hat{\pi}^{\Lambda(\bm{w}^*)}(\Lambda_x).
\end{equation} 
\end{definition}
Given this definition, the causal effects of such a stochastic manipulation is identifiable if and only if   $\pi^{\Lambda(\bm{w}^*)}(\Lambda_y||\Lambda_x)$ can be uniquely estimated for every $x\in x(E(\bm{w}^*))$ given the CEG and the observations. Recall that the intervened positions $\bm{w}^*$ may not form a fine cut. So $\sum_{x\in x(E(\bm{w}^*))}\pi(\Lambda_x)$ may not be equal to 1 unless conditional on $\Lambda(\bm{w}^*)$. But when the intervened positions $\bm{w}^*$ form a fine cut, then $\Lambda(\bm{w}^*)=\Lambda_{\mathcal{C}}$ and $\sum_{x\in x(E(\bm{w}^*))}\pi(\Lambda_x)=1$.

Note that $\pi^{\Lambda(\bm{w}^*)}(\Lambda_y||\Lambda_x)$ is estimated from the conditioned idle CEG $(\mathcal{C}^{\Lambda(\bm{w}^*)},\bm{\theta}^*)$. So the path related probability is $\pi(\cdot)$ instead of $\hat{\pi}(\cdot)$. On the other hand, $\hat{\pi}^{\Lambda(\bm{w}^*)}(\Lambda_x)$ is assumed to be known given $\bm{\hat{\theta}}_{\bm{w}^*}$. This is the post-intervention probability of a unit passing along the paths $\Lambda_x$ a stochastic manipulation on $\bm{\hat{\theta}}_{\bm{w}^*}$, where $\Lambda_x\subseteq\Lambda(\bm{w}^*)$. The post-intervention path related probabilities $\hat{\pi}^{\Lambda(\bm{w}^*)}(\Lambda_x)$ can be derived using equation \eqref{equ:manipulate path prob} and equation \eqref{conditional prob}.

No restriction is imposed on the intervened d-events. Under remedial interventions, however, these are normally root causes. \cite{Yu2021} also discussed another type of intervention in reliability in which case the intervened d-events can be a component, a typical symptom and so on, depending on the maintenance scheduled and conducted by the engineers. Definition \ref{def:causal effects} generically formulates a causal query from a stochastic manipulation which should not be restricted for remedial interventions.


Next we show the causal identifiability of the stochastic manipulation. 
    \begin{proposition}
    Suppose a remedial intervention imposes a stochastic manipulation on the distributions of $\mathcal{F}(\bm{w}^*)$. Then given the post-intervention transition probabilities $\bm{\hat{\theta}}_{\bm{w}^*}$, the effects of this intervention are identifiable if and only if
    $\pi^{\Lambda(\bm{w}^*)}(\Lambda_y||\Lambda_x)$ can be uniquely estimated for every $x\in x(E(\bm{w}^*))$ given the CEG and the observations.
    \end{proposition}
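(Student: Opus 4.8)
The plan is to build the entire argument on the decomposition in Definition~\ref{def:causal effects}, equation~(\ref{equ:simp proof backdoor}), which writes the causal effect as the weighted sum
\[\pi(\Lambda_y||\bm{\hat{\theta}}_{\bm{w}^*})=\sum_{x\in x(E(\bm{w}^*))}\pi^{\Lambda(\bm{w}^*)}(\Lambda_y||\Lambda_x)\,\hat{\pi}^{\Lambda(\bm{w}^*)}(\Lambda_x).\]
The first step is to observe that, under the hypothesis that $\bm{\hat{\theta}}_{\bm{w}^*}$ is known, every weight $\hat{\pi}^{\Lambda(\bm{w}^*)}(\Lambda_x)$ is a \emph{known} quantity: by equations~(\ref{equ:manipulate path prob}) and~(\ref{conditional prob}) it is computed purely from the manipulated transition probabilities, with no reference to the observational data. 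Hence the only unknowns on the right-hand side are the conditional path probabilities $\pi^{\Lambda(\bm{w}^*)}(\Lambda_y||\Lambda_x)$, which by equation~(\ref{eq:est1}) must be read off the conditioned idle CEG $(\mathcal{C}^{\Lambda(\bm{w}^*)},\bm{\theta}^*)$ and therefore from the observations. The proposition is thus the assertion that identifiability of the left-hand function is equivalent to identifiability of each coefficient on the right.

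The sufficiency (``if'') direction is then immediate: if each $\pi^{\Lambda(\bm{w}^*)}(\Lambda_y||\Lambda_x)$ can be uniquely estimated from the CEG and the observations, then, the weights being known constants, the finite sum is itself uniquely determined, so the causal effect is identifiable.

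The necessity (``only if'') direction is where the real work lies, and I would exploit the fact --- emphasised in Definition~\ref{def:causal effects} --- that the causal effect is not a single number but a \emph{function} of the manipulated vector $\bm{\hat{\theta}}_{\bm{w}^*}$. The coefficients $\pi^{\Lambda(\bm{w}^*)}(\Lambda_y||\Lambda_x)$ are evaluated from the pre-intervention conditioned CEG and so do not vary with $\bm{\hat{\theta}}_{\bm{w}^*}$; only the weights do. Using condition~4 of Definition~\ref{def:stochastic}, which forces all the probability leaving the parents of $\bm{w}^*$ into the intervened positions, one checks that conditional on $\Lambda(\bm{w}^*)$ a unit traverses exactly one controlled edge, so that $\{\Lambda_x\}_{x\in x(E(\bm{w}^*))}$ partitions $\Lambda(\bm{w}^*)$ and each weight reduces to the corresponding manipulated transition probability itself. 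Consequently $\bm{\hat{\theta}}_{\bm{w}^*}\mapsto\pi(\Lambda_y||\bm{\hat{\theta}}_{\bm{w}^*})$ is an affine map on the probability simplex whose coefficients are precisely the $\pi^{\Lambda(\bm{w}^*)}(\Lambda_y||\Lambda_x)$. An affine function on the simplex is determined by, and determines, its values at the vertices; so by letting $\bm{\hat{\theta}}_{\bm{w}^*}$ concentrate on each controlled edge in turn, each coefficient is isolated. Identifiability of the whole function therefore forces identifiability of every $\pi^{\Lambda(\bm{w}^*)}(\Lambda_y||\Lambda_x)$, as required.

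The main obstacle I anticipate is precisely the linear-algebraic heart of the necessity direction: proving that the weights $\hat{\pi}^{\Lambda(\bm{w}^*)}(\Lambda_x)$, viewed as functions of $\bm{\hat{\theta}}_{\bm{w}^*}$, are linearly independent so that no cancellation can hide an unidentifiable coefficient. When $\bm{w}^*$ contains several intervened positions lying on a common path, or when the same controlled d-event labels edges emanating from distinct intervened positions, the weights become products or aggregates of manipulated probabilities rather than single coordinates, and one must argue that the corresponding monomials remain linearly independent over the admissible region. I would handle this by restricting the manipulation to vary one floret at a time, reducing to the single-position affine case already treated, and then composing; the open-simplex constraint $\hat{\theta}_{w,w'}\in(0,1)$ --- so the vertices are approached but never attained --- is dealt with by a continuity and finite-difference argument at interior points rather than by evaluating at the vertices directly.
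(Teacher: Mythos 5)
Your proof is correct and rests on the same decomposition as the paper's, namely equation~(\ref{equ:simp proof backdoor}) with the weights $\hat{\pi}^{\Lambda(\bm{w}^*)}(\Lambda_x)$ treated as externally determined; the sufficiency direction is identical in both. Where you differ is in the necessity direction. The paper argues by contraposition but essentially by assertion: if some $\pi^{\Lambda(\bm{w}^*)}(\Lambda_y||\Lambda_x)$ cannot be estimated then, since $\hat{\theta}_{w,w(x)}\in\bm{\hat{\theta}}_{\bm{w}^*}$, the effects of $\bm{\hat{\theta}}_{\bm{w}^*}$ ``could not be fully estimated''; it never rules out the possibility that an unidentifiable coefficient is hidden by cancellation in the weighted sum. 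You close exactly that gap by reading the causal effect as a function of $\bm{\hat{\theta}}_{\bm{w}^*}$ (as Definition~\ref{def:causal effects} states) and observing that this function is affine in the weights, so that varying $\bm{\hat{\theta}}_{\bm{w}^*}$ towards each vertex of the simplex isolates each coefficient; the continuity argument for the open-simplex constraint $\hat{\theta}_{w,w'}\in(0,1)$ is a sensible way to handle condition~2 of Definition~\ref{def:stochastic}. This buys a genuinely watertight necessity direction at the cost of the extra linear-independence analysis you flag for the case of several intervened positions or repeated d-event labels --- a case the paper's proof does not address either, and where your one-floret-at-a-time reduction is only sketched rather than carried out. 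As written, your argument is at least as complete as the paper's and makes explicit the assumption on which the paper's contradiction step silently relies.
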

\begin{proof} (1) $\pi^{\Lambda(\bm{w}^*)}(\Lambda_y||\Lambda_x)\text{ is identifiable}\implies\pi(\Lambda_y||\bm{\hat{\theta}}_{\bm{w}^*})\text{ is identifiable}$:
from equation \eqref{equ:simp proof backdoor}, it is straightforward to see that if $\pi^{\Lambda(\bm{w}^*)}(\Lambda_y||\Lambda_x)$ is identifiable for all $x\in x(E(\bm{w}^*))$, then $\pi(\Lambda_y||\bm{\hat{\theta}}_{\bm{w}^*})$ can be estimated, since $\hat{\pi}^{\Lambda(\bm{w}^*)}(\Lambda_x)$ is determined externally.

(2) $\pi(\Lambda_y||\bm{\hat{\theta}}_{\bm{w}^*})\text{ is identifiable}\implies\pi^{\Lambda(\bm{w}^*)}(\Lambda_y||\Lambda_x)\text{ is identifiable}$: if there exists a d-event $x\in x(E(\bm{w}^*))$ so that $\pi^{\Lambda(\bm{w}^*)}(\Lambda_y||\Lambda_x)$ cannot be estimated, then the effects of the manipulation of $\hat{\theta}_{w,w(x)}$ are not identifiable. Since $e_{w,w(x)}\in E(\bm{w}^*)$ and $w\in \bm{w}^*$, $\hat{\theta}_{w,w(x)}\in\bm{\hat{\theta}}_{\bm{w}^*}$. Therefore the effects of $\bm{\hat{\theta}}_{\bm{w}^*}$ could not be fully estimated, so $\pi(\Lambda_y||\bm{\hat{\theta}}_{\bm{w}^*})$ could not be estimated, giving the required contradiction. So the identifiability of $\pi(\Lambda_y||\bm{\hat{\theta}}_{\bm{w}^*})$ implies the identifiability of $\pi^{\Lambda(\bm{w}^*)}(\Lambda_y||\Lambda_x)$. \\
\end{proof}

Therefore, the identifiability of $\pi^{\Lambda(\bm{w}^*)}(\Lambda_y||\Lambda_x)$ is a necessary and sufficient condition for the identifiability of $\pi(\Lambda_y||\bm{\hat{\theta}}_{\bm{w}^*})$. \cite{Thwaites2013} has proved the identifiability of a singular manipulation on the CEG through adapting the back-door theorem and the front-door theorem which are graphical tests designed by \cite{Pearl2009} to examine the identifiability of the causal effects of an atomic intervention on the causal BN. For simplicity, here we only focus on extending the back-door theorem here to prove causal identifiability. The idea is to find a partition of the root-to-sink paths $\Lambda_{\mathcal{C}}$, denoted by $\Lambda_{\bm{z}}$ so that the following criteria are satisfied.

\begin{theorem}\label{backdoor criterion} For any $e_{w,w'}\in e(x)$, if
\begin{equation}\label{equ:criterion1}
    \pi(\Lambda_z|\Lambda(w)) = \pi(\Lambda_z|\Lambda(e_{w,w'}))
\end{equation}
\begin{equation}\label{equ:criterion2}
  \pi(\Lambda_{y}|\Lambda_x,\Lambda_z)=\pi(\Lambda_{y}|\Lambda(w),\Lambda_x,\Lambda_z) = \pi(\Lambda_{y}|\Lambda(e_{w,w'}),\Lambda_z)
\end{equation} hold for every
element of $\{\Lambda_z\}$, then $\{\Lambda_z\}$ is the \textbf{back-door partition} \citep{Thwaites2013}. 
\end{theorem}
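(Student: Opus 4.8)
The plan is to prove the theorem in the same spirit as Pearl's back--door theorem, by showing that a partition $\{\Lambda_z\}$ satisfying the two criteria renders the intervention query equal to an adjustment over observational path probabilities, namely
\[
  \pi(\Lambda_y||\Lambda_x)\;=\;\sum_{\Lambda_z}\pi(\Lambda_y\,|\,\Lambda_x,\Lambda_z)\,\pi(\Lambda_z),
\]
whose right--hand side involves only pre--intervention quantities and is therefore estimable from the idle CEG. Because the intervened positions $\bm{w}^*$ need not form a fine cut, I would read every quantity off the conditioned CEG $\mathcal{C}^{\Lambda(\bm{w}^*)}$ with the renormalised probabilities of equation (\ref{idle conditional prob}), and then conclude identifiability of each $\pi^{\Lambda(\bm{w}^*)}(\Lambda_y||\Lambda_x)$ by the reduction already recorded in equations (\ref{eq:est1}) and (\ref{eq:est2}).

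First I would decompose the manipulated distribution over the blocks of the partition. Since $\{\Lambda_z\}$ partitions the root--to--sink paths, the law of total probability applied under the manipulation that forces a controlled edge $e_{w,w'}\in e(x)$ gives
\[
  \pi(\Lambda_y||\Lambda_x)\;=\;\sum_{\Lambda_z}\pi(\Lambda_y||\Lambda_x,\Lambda_z)\,\pi(\Lambda_z||\Lambda_x).
\]
This expansion is purely formal and is the exact analogue of $p(y|do(x))=\sum_z p(y|do(x),z)\,p(z|do(x))$; it uses only that the manipulated measure is still a probability measure on the paths of the conditioned CEG.

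The two reductions then come from the two criteria. Criterion (\ref{equ:criterion1}) states that block membership $\Lambda_z$ carries the same probability whether one merely reaches the intervened position $w$ or additionally traverses the controlled edge $e_{w,w'}$; because the manipulation acts only by reassigning the transition probabilities of the floret $\mathcal{F}(w)$ and leaves every upstream floret untouched, this edge--invariance lifts to the full manipulated law, yielding $\pi(\Lambda_z||\Lambda_x)=\pi(\Lambda_z)$. Criterion (\ref{equ:criterion2}) states that, once we condition on a block $\Lambda_z$, the probability of $\Lambda_y$ is the same whether $x$ is observed or the controlled edge is traversed; I would use the equality $\pi(\Lambda_y|\Lambda_x,\Lambda_z)=\pi(\Lambda_y|\Lambda(e_{w,w'}),\Lambda_z)$ to replace the manipulated conditional $\pi(\Lambda_y||\Lambda_x,\Lambda_z)$ by its observational counterpart $\pi(\Lambda_y|\Lambda_x,\Lambda_z)$, since forcing $e_{w,w'}$ together with unchanged downstream propagation agrees with ordinary conditioning on that edge. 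Substituting both reductions into the decomposition produces the adjustment identity above.

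The main obstacle is justifying the first reduction rigorously for a \emph{non--singular}, stochastic manipulation rather than a single forced edge: I must verify that reassigning the whole floret $\mathcal{F}(\bm{w}^*)$, and passing to the conditioned CEG when $\bm{w}^*$ is not a fine cut, does not disturb the distribution of $\Lambda_z$, so that the edge--wise statement (\ref{equ:criterion1}) indeed lifts to $\pi(\Lambda_z||\Lambda_x)=\pi(\Lambda_z)$. The cleanest route is to split each path in $\Lambda_z$ into an initial segment up to $w$ and a terminal segment after $w$ and to observe that criterion (\ref{equ:criterion1}) confines all $z$--relevant information to components whose probabilities the manipulation does not alter, so the marginal over blocks is preserved. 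I would therefore state the argument first for a single controlled edge, establish the adjustment identity there, and then extend it edge--by--edge across the intervened floret by linearity of the block expansion, keeping the path--probability bookkeeping consistent with equation (\ref{idle conditional prob}) throughout.
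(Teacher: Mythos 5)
Your proposal is correct and follows essentially the same route as the paper's own argument (which appears in Appendix \ref{app:a} as the proof of Theorem \ref{theorem}, since Theorem \ref{backdoor criterion} itself is really a definitional criterion imported from Thwaites): expand the interventional query over the blocks of the partition, use criterion (\ref{equ:criterion1}) to identify $\pi(\Lambda_z||\Lambda_x)$ with $\pi(\Lambda_z)$, and use criterion (\ref{equ:criterion2}) to replace the interventional conditional by its observational counterpart, yielding the adjustment formula. The only cosmetic difference is that the paper derives these two reductions by comparing the block expansion against an explicit weighted sum over intervened positions $w$ and controlled edges $e_{w,w'}$, whereas you assert them directly from the criteria; both are the same argument.
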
 

For a stochastic manipulation, as we have explained, we wish to estimate $\pi^{\Lambda(\bm{w}^*)}(\Lambda_y||\Lambda_x)$ from $(\mathcal{C}^{\Lambda(\bm{w}^*)},\bm{\theta}^*)$. So we can simply adapt the above two criteria as follows.

\begin{theorem}\label{our backdoor} For any $w\in\bm{w}^*$, for all $x\in x(E(\bm{w}^*))$ and any $e_{w,w'}\in e(x)$ if 
\begin{equation}\label{our criterion1}
     \pi^{\Lambda(\bm{w}^*)}(\Lambda_z|\Lambda(w)) = \pi^{\Lambda(\bm{w}^*)}(\Lambda_z|\Lambda(e_{w,w'}))
\end{equation} and
\begin{equation}\label{our criterion2}
    \pi^{\Lambda(\bm{w}^*)}(\Lambda_{y}|\Lambda(w),\Lambda_x,\Lambda_z) = \pi^{\Lambda(\bm{w}^*)}(\Lambda_{y}|\Lambda(e_{w,w'}),\Lambda_z).
\end{equation}hold for every
element of $\{\Lambda_z\}$, then $\{\Lambda_z\}$ is the back-door partition for identifying the effects of a remedial intervention.
\end{theorem}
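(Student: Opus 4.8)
The plan is to prove the theorem by a reduction: I will treat the conditioned CEG $(\mathcal{C}^{\Lambda(\bm{w}^*)},\bm{\theta}^*)$ of Definition \ref{idle conditional} as a genuine CEG in its own right and then invoke the singular back-door theorem, Theorem \ref{backdoor criterion}, verbatim on it. The foregoing proposition has already reduced identifiability of the remedial intervention's effect $\pi(\Lambda_y||\bm{\hat{\theta}}_{\bm{w}^*})$ to identifiability of the singular causal quantity $\pi^{\Lambda(\bm{w}^*)}(\Lambda_y||\Lambda_x)$ for every controlled d-event $x\in x(E(\bm{w}^*))$, via the decomposition (\ref{equ:simp proof backdoor}) together with (\ref{eq:est1})--(\ref{eq:est2}). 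Since the factor $\hat{\pi}^{\Lambda(\bm{w}^*)}(\Lambda_x)$ appearing there is supplied externally by $\bm{\hat{\theta}}_{\bm{w}^*}$, the entire task is to exhibit a partition that makes each $\pi^{\Lambda(\bm{w}^*)}(\Lambda_y||\Lambda_x)$ estimable from observation of the idle system restricted to $\Lambda(\bm{w}^*)$.

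First I would confirm that $(\mathcal{C}^{\Lambda(\bm{w}^*)},\bm{\theta}^*)$ satisfies the axioms of a CEG: its positions, edges, and conditional transition probabilities $\theta^*_{w,w'}=\pi^{\Lambda(\bm{w}^*)}(w_j|w_i)$ of (\ref{idle conditional prob}) are well-defined and normalised, so $\pi^{\Lambda(\bm{w}^*)}(\cdot)$ is a bona fide path-related probability on $\Lambda(\bm{w}^*)$. Next I would observe that the pair of conditions (\ref{our criterion1})--(\ref{our criterion2}) is exactly the pair (\ref{equ:criterion1})--(\ref{equ:criterion2}) of Theorem \ref{backdoor criterion} with every occurrence of $\pi(\cdot)$ replaced by $\pi^{\Lambda(\bm{w}^*)}(\cdot)$, and with the manipulated edge ranging over $e_{w,w'}\in e(x)$ at each intervened position $w\in\bm{w}^*$. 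Applying Theorem \ref{backdoor criterion} inside this conditioned CEG then yields, for each controlled $x$, the back-door adjustment
\begin{equation}
  \pi^{\Lambda(\bm{w}^*)}(\Lambda_y||\Lambda_x)=\sum_{z}\pi^{\Lambda(\bm{w}^*)}(\Lambda_y|\Lambda_x,\Lambda_z)\,\pi^{\Lambda(\bm{w}^*)}(\Lambda_z),
\end{equation}
whose right-hand side is a function of observational conditional probabilities on $(\mathcal{C}^{\Lambda(\bm{w}^*)},\bm{\theta}^*)$ alone. Because the hypotheses are required to hold simultaneously for all $x\in x(E(\bm{w}^*))$ and all $e_{w,w'}\in e(x)$, the single partition $\{\Lambda_z\}$ identifies every singular term at once; substituting these back into (\ref{equ:simp proof backdoor}) identifies $\pi(\Lambda_y||\bm{\hat{\theta}}_{\bm{w}^*})$, which is precisely the assertion that $\{\Lambda_z\}$ is the back-door partition.

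The main obstacle I anticipate is justifying that this reduction is \emph{faithful} -- that performing the singular manipulation of $\Lambda_x$ inside the conditioned CEG reproduces the relevant component of the stochastic manipulation on the original CEG, i.e.\ that conditioning on $\Lambda(\bm{w}^*)$ and forcing $\Lambda_x$ commute in the required sense. Concretely, I would need to check that $\{\Lambda_z\}$, defined as a partition of the full path space $\Lambda_{\mathcal{C}}$, restricts to a well-defined partition of $\Lambda(\bm{w}^*)$, and that the conditioned probabilities entering (\ref{our criterion1})--(\ref{our criterion2}) agree with the probabilities that the singular back-door theorem consumes. A secondary point to verify is that the abbreviated form of (\ref{our criterion2}), which retains only the outer equality of the three-way chain in (\ref{equ:criterion2}), still suffices: I would argue that the missing equality $\pi^{\Lambda(\bm{w}^*)}(\Lambda_y|\Lambda_x,\Lambda_z)=\pi^{\Lambda(\bm{w}^*)}(\Lambda_y|\Lambda(w),\Lambda_x,\Lambda_z)$ holds automatically, because within $\mathcal{C}^{\Lambda(\bm{w}^*)}$ conditioning on $\Lambda_x$ already forces the trajectory through the position $w$ carrying the edge $e_{w,w'}\in e(x)$, so further conditioning on $\Lambda(w)$ adds no information. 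Once these two structural checks are in place, the remainder is the routine back-door algebra inherited directly from Theorem \ref{backdoor criterion}.
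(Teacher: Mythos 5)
Your proposal is correct and follows essentially the same route as the paper: the paper's own justification (stated after Theorem \ref{theorem} and carried out in Appendix \ref{app:a}) is precisely to run the singular back-door argument of Theorem \ref{backdoor criterion} inside the conditioned CEG $(\mathcal{C}^{\Lambda(\bm{w}^*)},\bm{\theta}^*)$, using the two criteria to exchange conditioning on $\Lambda(e_{w,w'})$ for conditioning on $\Lambda(w)$ and then marginalising over the fine cut $\bm{w}^*$, before summing over the controlled d-events via equation (\ref{equ:simp proof backdoor}). The structural checks you flag (that $\pi^{\Lambda(\bm{w}^*)}$ is a bona fide path probability, and that the weakened form of criterion (\ref{our criterion2}) suffices because conditioning on $\Lambda_x$ together with the position-wise hypotheses recovers the dropped equality after averaging over $w\in\bm{w}^*$) are exactly the steps the appendix performs implicitly.
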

Note that $\bm{w}^*$ form a fine cut of $\Lambda(\bm{w}^*)$ and $ch(\bm{w}^*)$ also form a fine cut of $\Lambda(\bm{w}^*)$. The set of paths $\{\Lambda_{z}\}$ is a partition of $\Lambda(\bm{w}^*)$. Next we formalise the back-door theorem for identifying the causal effects of a remedial intervention.

\begin{theorem} \label{theorem}The effects of a stochastic manipulation are identifiable whenever a back-door partition $\{\Lambda_{z}\}$ can be found so that
\begin{equation}\label{use1}
    \pi(\Lambda_y||\bm{\hat{\theta}}_{\bm{w}^*})=\sum_{x\in x(E(\bm{w}^*))}\sum_{z}\pi^{\Lambda(\bm{w}^*)}(\Lambda_y|\Lambda_x,\Lambda_z)\pi^{\Lambda(\bm{w}^*)}(\Lambda_z)\hat{\pi}^{\Lambda(\bm{w}^*)}(\Lambda_x).
\end{equation}
This holds when
\begin{equation}
  \pi^{\Lambda(\bm{w}^*)}(\Lambda_y|\hat{\Lambda}_x,\Lambda_z)=  \pi^{\Lambda(\bm{w}^*)}(\Lambda_y|\Lambda_x,\Lambda_z),
\end{equation}here $\hat{\Lambda}_x$ denotes that there is a singular intervention on $\Lambda_x$ only, not on $\Lambda_z$, and
\begin{equation}
   \pi^{\Lambda(\bm{w}^*)}(\Lambda_z||\Lambda_x)= \pi^{\Lambda(\bm{w}^*)}(\Lambda_z).
\end{equation}

\end{theorem}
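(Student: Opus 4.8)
The plan is to reduce the claim to the back-door adjustment of a \emph{singular} manipulation on the conditioned CEG, and then invoke the reduction already recorded in Definition~\ref{def:causal effects}. By equation~(\ref{equ:simp proof backdoor}) the target quantity factorises as
\begin{equation*}
  \pi(\Lambda_y||\bm{\hat{\theta}}_{\bm{w}^*})=\sum_{x\in x(E(\bm{w}^*))}\pi^{\Lambda(\bm{w}^*)}(\Lambda_y||\Lambda_x)\,\hat{\pi}^{\Lambda(\bm{w}^*)}(\Lambda_x),
\end{equation*}
in which the weights $\hat{\pi}^{\Lambda(\bm{w}^*)}(\Lambda_x)$ are externally supplied, being known from $\bm{\hat{\theta}}_{\bm{w}^*}$. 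Hence it suffices to prove, for each controlled d-event $x\in x(E(\bm{w}^*))$, the single-intervention back-door identity
\begin{equation*}
  \pi^{\Lambda(\bm{w}^*)}(\Lambda_y||\Lambda_x)=\sum_{z}\pi^{\Lambda(\bm{w}^*)}(\Lambda_y|\Lambda_x,\Lambda_z)\,\pi^{\Lambda(\bm{w}^*)}(\Lambda_z),
\end{equation*}
since substituting this into the factorisation immediately yields equation~(\ref{use1}).

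Second, I would establish this identity by conditioning on the back-door partition inside $(\mathcal{C}^{\Lambda(\bm{w}^*)},\bm{\theta}^*)$. Because $\{\Lambda_z\}$ partitions $\Lambda(\bm{w}^*)$, the law of total probability applied to the post-intervention distribution gives the CEG analogue of Pearl's expansion $P(y\,|\,do(x))=\sum_z P(y\,|\,do(x),z)P(z\,|\,do(x))$, namely
\begin{equation*}
  \pi^{\Lambda(\bm{w}^*)}(\Lambda_y||\Lambda_x)=\sum_{z}\pi^{\Lambda(\bm{w}^*)}(\Lambda_y|\hat{\Lambda}_x,\Lambda_z)\,\pi^{\Lambda(\bm{w}^*)}(\Lambda_z||\Lambda_x),
\end{equation*}
where $\hat{\Lambda}_x$ records that the singular manipulation acts on $\Lambda_x$ alone. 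This step uses only the definition of the singular manipulation (forcing $\theta_{w,w'}=1$ along $e(x)$ and the remaining transition probabilities at $w$ to zero) together with the partition property; it is the same expansion used by Thwaites \citep{Thwaites2010,Thwaites2013}, now transported to the conditioned CEG.

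Third, I would discharge the two factors using the back-door criteria of Theorem~\ref{our backdoor}. Criterion~(\ref{our criterion1}) asserts that the conditional law of $\Lambda_z$ is unchanged whether one conditions on reaching $w$ or on additionally traversing $e_{w,w'}$; aggregating this edge-level invariance over the controlled edges shows that the partition is insensitive to the manipulation, i.e.\ $\pi^{\Lambda(\bm{w}^*)}(\Lambda_z||\Lambda_x)=\pi^{\Lambda(\bm{w}^*)}(\Lambda_z)$, which is the second displayed hypothesis. Criterion~(\ref{our criterion2}) asserts that conditioning on $\Lambda_z$ renders passage through the manipulated edge informationally equivalent to its observation, giving $\pi^{\Lambda(\bm{w}^*)}(\Lambda_y|\hat{\Lambda}_x,\Lambda_z)=\pi^{\Lambda(\bm{w}^*)}(\Lambda_y|\Lambda_x,\Lambda_z)$, the first displayed hypothesis. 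Substituting both equalities into the expansion of the previous paragraph delivers the single-intervention identity, and feeding it back through equation~(\ref{equ:simp proof backdoor}) produces equation~(\ref{use1}).

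I expect the main obstacle to be this third step: passing rigorously from the edge-level equalities of criteria~(\ref{our criterion1}) and~(\ref{our criterion2}) to the intervention-level statements. Care is needed because the singular manipulation reweights every path through $w$ simultaneously, so one must verify that the invariance stated per edge $e_{w,w'}\in e(x)$ and per intervened position $w\in\bm{w}^*$ really does combine into an invariance of the marginal $\pi^{\Lambda(\bm{w}^*)}(\Lambda_z)$ under the re-normalisation that conditioning on $\Lambda(\bm{w}^*)$ imposes. I would handle this as in Thwaites's back-door proof, checking that the re-normalisation factors cancel precisely because $\{\Lambda_z\}$ is a partition of $\Lambda(\bm{w}^*)$ and both $\bm{w}^*$ and $ch(\bm{w}^*)$ form fine cuts of $\Lambda(\bm{w}^*)$, so that the back-door contributions reach $\Lambda_y$ only through $\Lambda_z$ and $\Lambda_x$.
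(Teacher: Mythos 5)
Your proposal is correct and follows essentially the same route as the paper's own proof in Appendix A: reduce via equation (\ref{equ:simp proof backdoor}) to the singular back-door identity on the conditioned CEG, expand $\pi^{\Lambda(\bm{w}^*)}(\Lambda_y||\Lambda_x)$ over the partition $\{\Lambda_z\}$, and apply the two criteria of Theorem \ref{our backdoor}, with the edge-to-position aggregation (which you rightly flag as the delicate step) handled exactly as the paper does by summing $\pi^{\Lambda(\bm{w}^*)}(\Lambda(w))$ over the fine cut so that the conditional terms collapse to $\pi^{\Lambda(\bm{w}^*)}(\Lambda_z)$. The only cosmetic difference is direction: the paper computes the observational estimate first and then compares it with the interventional expansion to read off the two displayed equalities, whereas you substitute those equalities directly into the expansion; the content is the same.
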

 It is now straightforward to deduce this from the two criteria in Theorem \ref{our backdoor} and adapting the proof of the back-door theorem for the singular intervention \citep{Thwaites2013}. We prove this theorem in the supplementary material.
  We allow flexibility in choosing $\bm{z}$, where $\bm{z}$ can be a set of d-events $\bm{z}=\{z_1,\cdots,z_l\}$. Then $\Lambda_{\bm{z}}=\{\Lambda_{z_1},\cdots,\Lambda_{z_l}\}$. We can also define $\bm{z}$ to be a set of stages, positions or edges. Next, we give an example to show how to find an appropriate back-door partition for a remedial intervention.

We can now illustrate how the formulae work for the bushing example. 
 \begin{example}Recall that we let the intervened position be $w_1$ and the manipulated CEG is Figure \ref{fig:manipulated ceg}. The intervened paths are $\Lambda_{x(e_{w_1,w_3}^1)}\cup\Lambda_{x(e_{w_1,w_3}^2)}\cup\Lambda_{x(e_{w_1,w_4})}\cup\Lambda_{x(e_{w_1,w_5})}$\footnote{The superscript indexes the order of the edge connecting $w_1$ to $w_3$ from top to bottom.}. Suppose we are interested in how the maintenance will affect system failure. Then $\Lambda_y=\Lambda_{\text{fail}}$.

Let $\Lambda_{\bm{z}}=\{\Lambda_{z_1},\Lambda_{z_2}\}$ be a partition of $\Lambda(w_1)$. We can define this partition according to the symptoms. Let $z_1$ be a set of d-events: $\{$oil leak, loss of oil, thermal runaway$\}$, \textit{i.e.} $e(z_1)=\{e_{w_3,w_6},e_{w_4,w_8}^1,e_{w_5,w_8}^1\}$. Let $z_2$ be a set of d-events: $\{$no oil leak and no other faults, mix of oil, electrical discharge$\}$. Then $e(z_2)=\{e_{w_3,w_7},e_{w_4,w_8}^2,e_{w_5,w_8}^2\}$. Thus, $ \Lambda_{z_1}=\Lambda_{x(e_{w_3,w_6})}\cup\Lambda_{x(e_{w_4,w_8}^1)}\cup\Lambda_{x(e_{w_5,w_8}^1)}$, $ \Lambda_{z_2}=\Lambda_{x(e_{w_3,w_7})}\cup\Lambda_{x(e_{w_4,w_8}^2)}\cup\Lambda_{x(e_{w_5,w_8}^2)}$.

Suppose the post-intervention probabilities $\bm{\hat{\theta}}_{w_1}$ are known. Having a stochastic manipulation on the distribution over $\mathcal{F}(w_1)$ can be treated as having a singular manipulation on $e_{w_1,w'}$ with probability $\hat{\theta}_{w_1,w'}$, where $w'\in ch(w_1)$. We can validate Theorem \ref{our backdoor} for all $x\in x(E(w_1))$. To check the first criterion, 
\begin{equation}
\begin{split}
\pi^{\Lambda(w_1)}(\Lambda(e(z_1))|\Lambda_{x(e_{w_1,w_3}^1)})
&=\frac{\theta^*_{w_0,w_1}\theta_{w_1,w_3}^{*1}\theta_{w_{3},w_6}^*}{\theta^*_{w_0,w_1}\theta_{w_1,w_3}^{*1}}= \theta_{w_{3},w_6}^*.
\end{split}
\end{equation} We can also compute:
\begin{equation}\label{eq:probs}
\begin{split}
     \pi^{\Lambda(w_1)}(\Lambda(e(z_1))|\Lambda(w_1))&=\frac{\theta^*_{w_0,w_1}(\theta_{w_1,w_3}^{*1}+\theta_{w_1,w_3}^{*2})\theta_{w_{3},w_6}^*+\theta^*_{w_0,w_1}\theta^*_{w_1,w_4}\theta_{w_{4},w_8}^{*1}+\theta^*_{w_0,w_1}\theta^*_{w_1,w_5}\theta_{w_{5},w_8}^{*1}}{\theta^*_{w_0,w_1}} \\
     &= (\theta_{w_1,w_3}^{*1}+\theta_{w_1,w_3}^{*2})\theta_{w_{3},w_6}^*+\theta^*_{w_1,w_4}\theta_{w_{4},w_8}^{*1}+\theta^*_{w_1,w_5}\theta_{w_{5},w_8}^{*1}= \theta_{w_{3},w_6}^*.
\end{split}
\end{equation}The last equality is valid because $e_{w_3,w_6},e_{w_4,w_8}^1,e_{w_5,w_8}^1$ have the same transition probabilities and $\theta_{w_1,w_3}^{1*}+\theta_{w_1,w_3}^{2*}+\theta^*_{w_1,w_4}+\theta^*_{w_1,w_5}=1$. 
So,
\begin{equation}
  \pi^{\Lambda(w_1)}(\Lambda(e(z_1))|\Lambda(e_{w_1,w_3}^1))= \pi^{\Lambda(w_1)}(\Lambda(e(z_1))|\Lambda(w_1)).
\end{equation} Following in this way, it is easy to check that the first criterion is satisfied for all manipulated $x$ and partition $z$. 

Now we check whether the second criterion is satisfied.
\begin{equation}
\begin{split}
      \pi^{\Lambda(w_1)}(\Lambda_{\text{fail}}|\Lambda(e(z_1)),\Lambda_{x(e_{w_1,w_3}^1)})&= \pi^{\Lambda(w_1)}(\Lambda_{\text{fail}}|\Lambda(e(z_1))\cap\Lambda_{x(e_{w_1,w_3}^1)}) \\
      &=\pi^{\Lambda(w_1)}(\Lambda_{\text{fail}}|\Lambda(e_{w_3,w_6}),\Lambda(e_{w_1,w_3}^1))\\
      &=\pi^{\Lambda(w_1)}(\Lambda_{\text{fail}}|\Lambda(e(z_1)),\Lambda(e_{w_1,w_3}^1)). 
\end{split}
\end{equation} Using the same method, it is straightforward to check the second criterion is satisfied for $e(z_1),e(z_2)$ and all the intervened events. Therefore, $\Lambda_{\bm{z}}$ is a back-door partition.
\end{example}

As mentioned in the previous section, there is a special case of the stochastic manipulation that $\bm{w}^*$ form a fine cut of the idle CEG. In this case, 
\begin{equation}
    \pi(\Lambda_y||\bm{\hat{\theta}}_{\bm{w}^*})=\sum_{x\in x(E(\bm{w}^*))}\sum_{z}\pi(\Lambda_y|\Lambda_x,\Lambda_z)\pi(\Lambda_z)\hat{\pi}^{\Lambda(\bm{w}^*)}(\Lambda_x).
\end{equation} Then to identify the causal effects, we only need to show that for every $x\in x(E(\bm{w}^*))$ we can find a back-door partition satisfying the criteria in Theorem \ref{backdoor criterion}.

The formulae given in this section is useful because we can formally identify the efficacy of the intervention which is expressed through a quantitative probability score which is a function of terms we can identify from the idle system.

\begin{figure}
    \centering
    \includegraphics[width=0.7\textwidth]{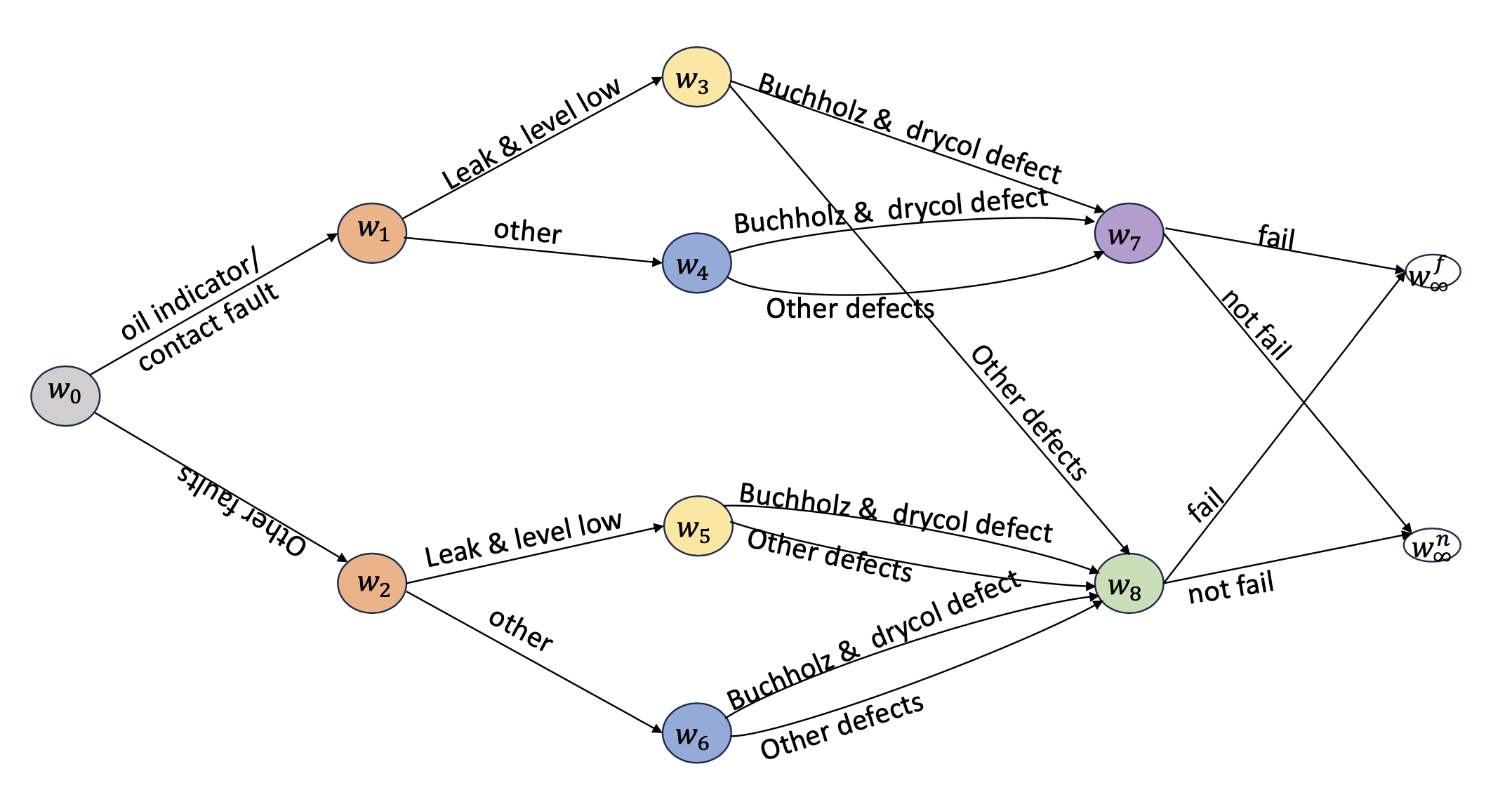}
    \caption{A hypothesised CEG of a conservator system for the example in Section \ref{sec:causal effect identifiability of stochastic manipulations}.}
    \label{fig:conservator ceg}
\end{figure}

\begin{example}\label{eg:con} Here we give an example of a conservator system of a transformer, see Figure \ref{fig:conservator ceg}. The initial events are root causes: $\{$\texttt{oil indicator/contact fault}, \texttt{other fault}$\}$. Following the root causes, we attach the oil status of the transformer: $\{$\texttt{leak} $\&$ \texttt{level low}, \texttt{other} $\}$, where \texttt{other} refers to the condition when there is no leak and oil level is normal, or there is only oil leak, or only oil level is low. Defects in two components buchholz and drycol may occur after oil problems. Here we consider whether the two components are both faulty or otherwise -- either is faulty or both are functioning. Then we attach the failure indicator. 

Suppose there is a remedial intervention which replaced the deteriorated seal. This maintenance remedied the contact fault. In response to the intervention, the conditional probability assigned to $e_{w_0,w_1}$ is expected to decrease and the probability distribution over $\mathcal{F}(w_0)$ is manipulated. 

The intervened paths are $\Lambda(w_0)=\Lambda_{\mathcal{C}}$. So the manipulated CEG has the same topology as the idle CEG. Let the effect event be system failure, \textit{i.e.} $\Lambda_y=\Lambda_{\text{fail}}$. We estimate the effect through:
\begin{equation}
\pi(\Lambda_{\text{fail}}||\bm{\hat{\theta}}_{w_0})=\pi(\Lambda_{x_{f,1}}||\Lambda(e_{w_0,w_1}))\hat{\theta}_{w_0,w_1}+\pi(\Lambda_{x_{f,1}}||\Lambda(e_{w_0,w_2}))\hat{\theta}_{w_0,w_2}.
\end{equation} 

We can define the back-door partition in terms of stages. The buchholz and drycol status is represented by the florets: $\mathcal{F}(w_3),\mathcal{F}(w_4),\mathcal{F}(w_5),\mathcal{F}(w_6)$. Note that $w_3$ and $w_5$ are in the same stage, while $w_4$ and $w_6$ are in the same stage. These stages lie upstream of $e(\text{fail})$ and downstream of the intervened root causes. Let $\Lambda_{z_1}=\{\Lambda(w_3),\Lambda(w_5)\}$ and $\Lambda_{z_2}=\{\Lambda(w_4),\Lambda(w_6)\}$. Then $\Lambda_{\bm{z}}=\{\Lambda_{z_1},\Lambda_{z_2}\}$ partitions $\Lambda_{\mathcal{C}}$ and each $z_i$ is associated with a single stage.

We check the first criterion by showing for any $z_i$, 
\begin{equation}
    \pi(\Lambda_{z_i}|w_0)=\pi(\Lambda_{z_i}).
\end{equation} For $z_1$,
\begin{equation}
\pi(\Lambda_{z_1})=\theta_{w_0,w_1}\theta_{w_1,w_3}+\theta_{w_0,w_2}\theta_{w_2,w_5}=\theta_{w_1,w_3}=\theta_{w_2,w_5}.
\end{equation} Given a singular intervention on  $e_{w_0,w_1}$,$\theta_{w_1,w_3}=\pi(\Lambda_{z_1}|\Lambda(e_{w_0,w_1}))$, and $\theta_{w_2,w_5}=\pi(\Lambda_{z_1}|\Lambda(e_{w_0,w_2}))$. Similarly,  it is easy to show that $\pi(\Lambda_{z_2}|\Lambda(e_{w_0,w_1}))=\pi(\Lambda_{z_2})$ and $\pi(\Lambda_{z_2}|\Lambda(e_{w_0,w_2}))=\pi(\Lambda_{z_2}).$ So the first criterion is satisfied.

When there is a singular intervention on  $e_{w_0,w_1}$, 
\begin{equation}
\pi(\Lambda_{\text{fail}}|\Lambda(w_0),\Lambda_{x},\Lambda_{z_1}))=\pi(\Lambda_{\text{fail}}|\Lambda(w_0),\Lambda(e_{w_0,w_1}),\Lambda_{z_1}))=\pi(\Lambda_{\text{fail}}|\Lambda(e_{w_0,w_1}),\Lambda_{z_1})),
\end{equation}and
\begin{equation}
    \pi(\Lambda_{\text{fail}}|\Lambda(w_0),\Lambda_{x},\Lambda_{z_2}))=\pi(\Lambda_{\text{fail}}|\Lambda(w_0),\Lambda(e_{w_0,w_1}),\Lambda_{z_1}))=\pi(\Lambda_{\text{fail}}|\Lambda(e_{w_0,w_1}),\Lambda_{z_2})).
\end{equation} Similarly, when there is a singular intervention on  $e_{w_0,w_2}$, we reach the same expressions as above with $e_{w_0,w_1}$ replaced by $e_{w_0,w_2}$. So the second criterion in Theorem \ref{our backdoor} is satisfied. Therefore, $\{\Lambda_{z_1},\Lambda_{z_2}\}$ forms a back-door partition here.
\end{example}

\begin{figure}
    \centering
    \begin{subfigure}[b]{0.45\textwidth}
        \includegraphics[width=\textwidth]{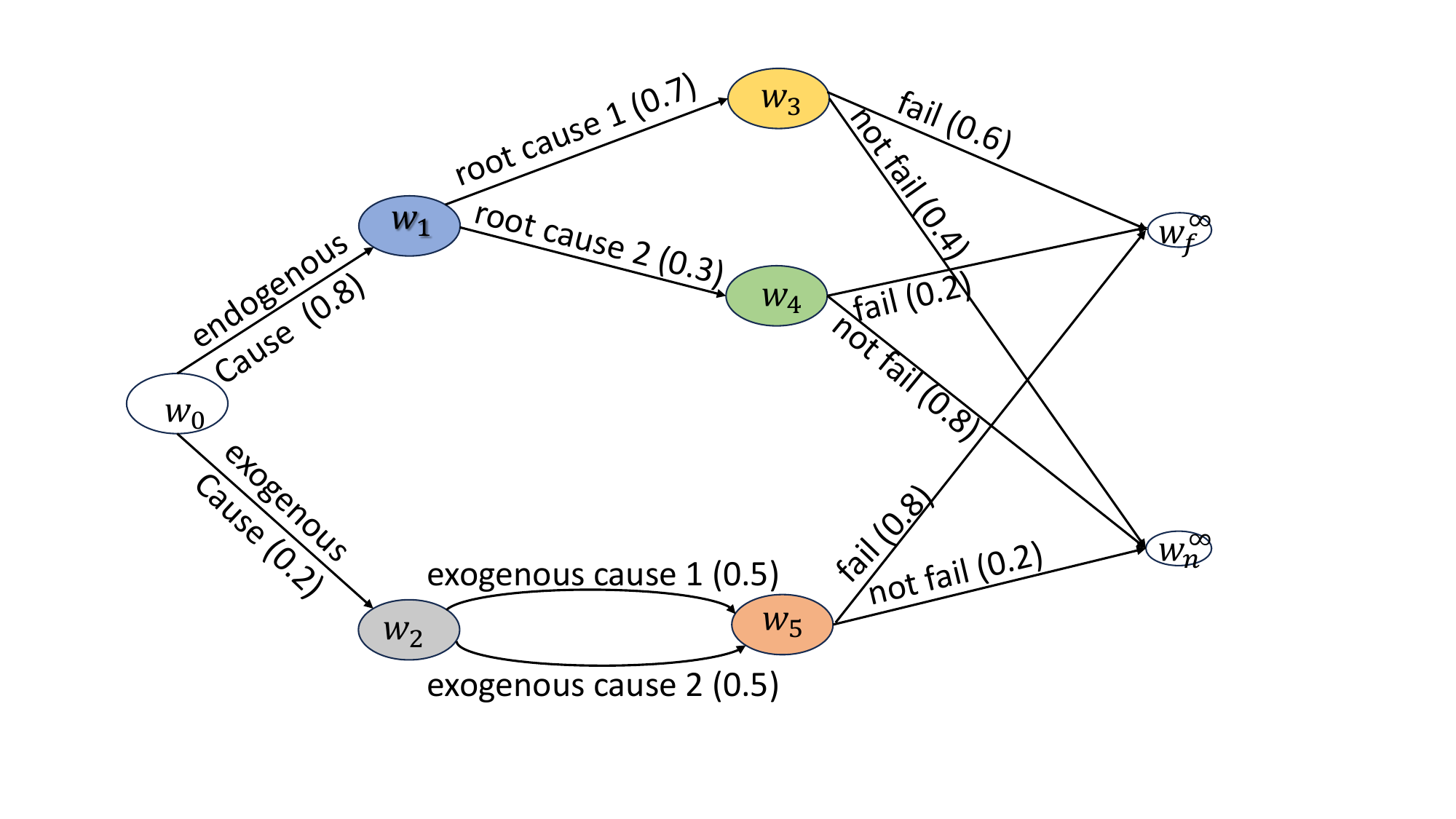}
    \caption{The true causal CEG. }
    \label{fig:true casual ceg}
    \end{subfigure}
    \hfill
    \begin{subfigure}[b]{0.45\textwidth}
    \includegraphics[width=\textwidth]{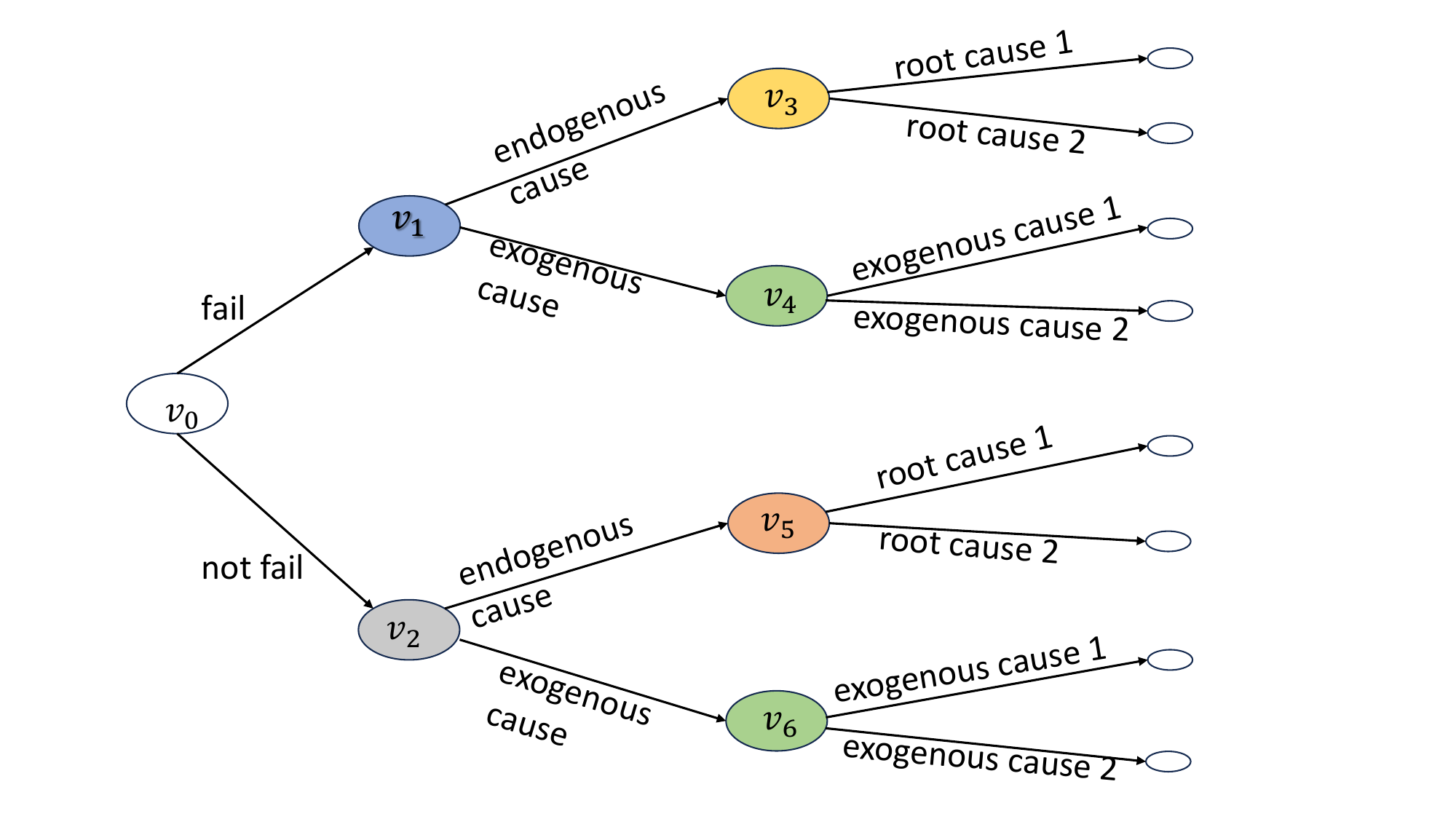}
        \caption{The learning tree.}
        \label{fig:learning tree}
    \end{subfigure}
    \caption{Trees for the simulation study. Numbers in brackets are the true parameters. We use the same colour palette for the trees. Vertices in Figure \ref{fig:true casual ceg} and Figure \ref{fig:learning tree} with the same colour do not share the same transition probability vector.}
\end{figure}

\section{A simulation study}\label{sec5}
The engineer reports in the domain we studied are sensitive. So instead we demonstrate how to apply the causal algebras proposed in previous sections in practice on a simulated dataset.\footnote{The R code implementing the analysis is available at \url{https://github.com/xwyu2021}. Other packages for CEGs are available, including R packages \texttt{stagedtrees} \citep{carli2020r}, \texttt{ceg} \citep{collazo2017ceg}, and the python package \texttt{cegpy} \citep{walley2023cegpy}.} Let Figure \ref{fig:true casual ceg} be the ground truth causal CEG \footnote{Here for simplicity we do not consider symptoms.} and we simulate a synthetic dataset comprising of $5000$ cases. There are $2698$ failure cases which emulate the information extracted from the failure report. The rest emulate the record of preventive maintenance \citep{YuEntropy2021}. We generate perfect remedies by simply assuming 
\begin{align}
  p(\delta=0|\text{endogenous cause},\text{fail},\text{root cause 1})&=  p(\delta=0|\text{endogenous cause},\text{fail},\text{root cause 2})\\
  &=p(\delta=0|\text{endogenous cause},\text{fail})=0.3.
\end{align}
 For the cases with non-perfect remedy we assume the failure processes are only partially observed.

In practice the explanations of the effects of remedial interventions will have been extracted from engineer's reports either manually or automatically \citep{Yu2022}. The tree consistent with these explanations and data should begin with the failure indicator. Thus we transform Figure \ref{fig:true casual ceg} to an equivalent tree in Figure \ref{fig:learning tree}, where the transition probabilities can be calculated using Bayes rule. We call it the \textbf{learning tree}.

Suppose we know the ground truth event tree for this system, which shares the same the topology as the learning tree. We next learn the parameter and stages of the tree from the simulated data. The parameters can be learned in a Bayesian framework based on equation \ref{equ:posterior} using the Metropolis-Hastings algorithm. Here we simply assume Dirichlet priors for transition probability vectors, where $\bm{\theta}_w\sim\text{Dirichlet}(\bm{\alpha}_w)$. The choice of $\bm{\alpha}_w$ is discussed in the supplementary material. The structure of the CEG (i.e. stages) is learned using the maximum a posterior (MAP) algorithm -- this is the most used Bayesian structural learning method for CEGs, see \citep{Barclay2013, Cowell2014, Yu2020, YuEntropy2021,strong2023methodological}. After learning the best-scored CEG on the learning tree, we transform it back to the causal CEG. This allow us to perform the causal analysis described in Section \ref{sec4}. To impose the effects of remedies, we manipulate $\bm{\theta}_{w_1}$ via $\Gamma$. Here we simply let $\frac{\hat{\theta}_{w_1,w_3}}{\hat{\theta}_{w_1,w_4}}=\frac{\theta_{w_1,w_3}}{\theta_{w_1,w_4}}\times 0.8^{\frac{N_{w_1,w_3}-N_{w_1,w_4}}{N}}$, where $N_{w_i,w_j}$ counts the observations associated with $e_{w_i,w_j}$. Other choices of $\Gamma$ have been discussed by \cite{Yu2020} and\cite{Yu2022}.

The best-scored tree learned from the synthetic data has the same structure as Figure \ref{fig:learning tree} where $v_3$ and $v_5$ are in different stages while $v_4$ and $v_8$ are in the same stage. Its MAP score is $-7212.494$,  $2698.319$ higher than the score of the tree with $v_3$ and $v_5$ in the same stage. The sensitivity analysis is summarised in Section B of the supplementary material. The manipulation shifts the distribution of $\theta_{w_1,w_3}$ to the left, as shown in Figure \ref{fig:manipulation}. The posterior mean of $\theta_{w_1,w_3}$ is reduced to $0.492$ from $0.503$ after manipulation. We can then predict the system failure induced by the two root causes based on the estimated distribution of $\hat{\bm{\theta}}_{w_1}$, see Figure \ref{fig:prediction}.

\begin{figure}
    \centering
    \begin{subfigure}[b]{0.43\textwidth}
\includegraphics[width=\textwidth]{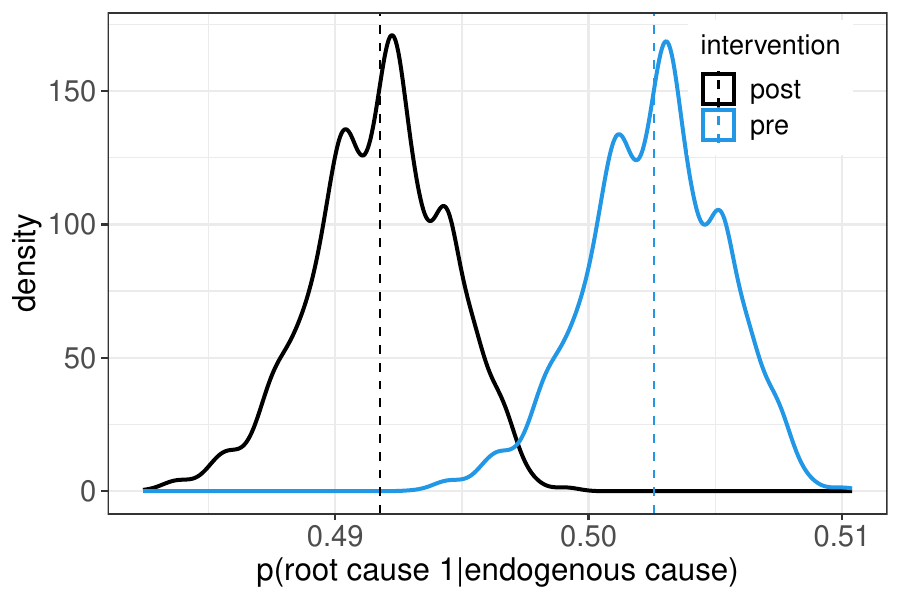}
    \caption{Distribution of $\theta_{w_1,w_3}$ before and after intervention.}
    \label{fig:manipulation}
    \end{subfigure}
    \hfill
    \begin{subfigure}[b]{0.43\textwidth}
\includegraphics[width=\textwidth]{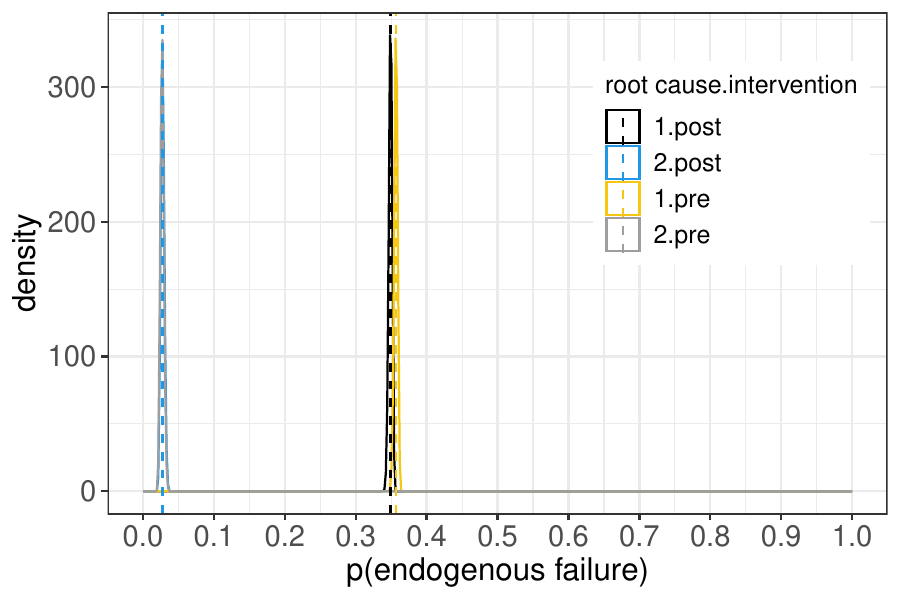}
        \caption{Predictive distribution of endogenous failure induced by different root causes.}
        \label{fig:prediction}
    \end{subfigure}
    \caption{Results of simulation study.}
    \label{fig:results}
\end{figure}


\section{Discussion}

In this article, we have shown the flexibility of the semantics of a CEG in representing asymmetric processes and capturing the effects of various interventions even when the manipulations are asymmetric. Given a context-specific CEG for a particular system, we can design the bespoke causal algebras for different types of remedial interventions. We can predict a machine's failure probability by imposing the underlying stochastic manipulations to the idle system so that we can identify the effects of the remedial intervention through finding appropriate back-door partition on the CEG. The graphical methodology we have described here therefore provides an excellent framework for translating established formal causal analyses so that these can be embedded into mainstream system reliability.

The original domain that motivated this formal development did not contain the case where the discovery of a fault would encourage an improvement of the system. However, a referee pointed out there are many domains where a fault would provoke a system upgrade \footnote{Section C in the supplementary material provides a more detailed discussion on this point.}. In such cases, the failure rate could be reduced over the entire lifetime after maintenance. Thus more complex manipulations, beyond restoring root causes to AGAN, should be considered. Semi-Markov processes can be modelled on dynamic CEGs \citep{Barclay2015}, enabling us to generalise our proposed algebras and manipulate failure rates directly. Our previous work \citep{Yu2020} sketched relevant ideas, which could be formalised in future research.

Finally we note that the inferential framework we have developed here can be adapted to accommodate natural language data extracted for example from maintenance logs where engineers write about the faults they observe and the possible reasons for what they see. We can embed the causal reasoning behind the texts by mapping them onto the event tree and learning the structure of the causal CEG. This requires natural language processing techniques where the d-events play a role as a bridge to link the tree to the free texts. \cite{Yu2021} and \cite{Yu2022} proposed a naive way to implement this idea which used a hierarchical structure. These can be used to help structure appropriate CEGs like the examples as well as informing the posterior floret probabilities.  More advanced algorithm can be developed in the future to automate this causal learning process.

\bibliographystyle{unsrtnat}


\begin{appendices}
\section*{Supplement to Causal Chain Event Graphs for Remedial Maintenance}

\section{The proof of Theorem 4.5}\label{app:a}
If we can find a partition $\{\Lambda_z\}$ of $\Lambda(\bm{w}^*)$, then the singular intervention causal query conditional on $\Lambda(\bm{w}^*)$ can be written as:
\begin{equation}\label{a1}
    \pi^{\Lambda(\bm{w}^*)}(\Lambda_y||\Lambda_x)=\sum_{z}\pi^{\Lambda(\bm{w}^*)}(\Lambda_y|\hat{\Lambda}_x,\Lambda_z)\pi^{\Lambda(\bm{w}^*)}(\Lambda_z||\Lambda_x).
\end{equation}
If we estimate this intervened quantity from the partially observed system, then
\begin{equation}
    \begin{split}
  \pi^{\Lambda(\bm{w}^*)}(\Lambda_y||\Lambda_x)&= \sum_{\substack{w\in\bm{w^*}\\ e(w,w')\in e(x)}}\pi^{\Lambda(\bm{w}^*)}(\Lambda(w))\pi^{\Lambda(\bm{w}^*)}(\Lambda_y|\Lambda(w')) \\
  &=\sum_{\substack{w\in\bm{w^*}\\ e(w,w')\in e(x)}}\pi^{\Lambda(\bm{w}^*)}(\Lambda(w))\pi^{\Lambda(\bm{w}^*)}(\Lambda_y|\Lambda(e_{w,w'}))\\
  &=\sum_{\substack{w\in\bm{w^*}\\ e(w,w')\in e(x)}}\pi^{\Lambda(\bm{w}^*)}(\Lambda(w))\sum_z\pi^{\Lambda(\bm{w}^*)}(\Lambda_y|\Lambda(e_{w,w'}),\Lambda_z)\pi^{\Lambda(\bm{w}^*)}(\Lambda_z|\Lambda(e_{w,w'}))\\
    \end{split}
\end{equation}
By the two criteria specified in Theorem 4.4 for the back-door partition $\{\Lambda_z\}$, we can replace the last two terms in the last step by the following:
\begin{equation}
\begin{split}
     \pi^{\Lambda(\bm{w}^*)}(\Lambda_y||\Lambda_x) & = \sum_{\substack{w\in\bm{w^*}\\ e(w,w')\in e(x)}}\pi^{\Lambda(\bm{w}^*)}(\Lambda(w))\sum_z\pi^{\Lambda(\bm{w}^*)}(\Lambda_y|\Lambda(w),\Lambda_x,\Lambda_z)\pi^{\Lambda(\bm{w}^*)}(\Lambda_z|\Lambda(w)) \\
     & = \sum_z\pi^{\Lambda(\bm{w}^*)}(\Lambda_y|\Lambda_x,\Lambda_z)\pi^{\Lambda(\bm{w}^*)}(\Lambda_z).
\end{split}
\end{equation}
Comparing with Equation \ref{a1}, we therefore have the following two equivalent expressions.

\begin{equation}
   \pi^{\Lambda(\bm{w}^*)}(\Lambda_y|\hat{\Lambda}_x,\Lambda_z) =\pi^{\Lambda(\bm{w}^*)}(\Lambda_y|\Lambda_x,\Lambda_z),
\end{equation}
\begin{equation}
 \pi^{\Lambda(\bm{w}^*)}(\Lambda_z||\Lambda_x)=   \pi^{\Lambda(\bm{w}^*)}(\Lambda_z).
\end{equation}

\section{Results of the simulation study}\label{app:b}
In Section 5, we used Dirichlet priors for the conditional probabilities $\bm{\theta}_v$. \citep{Collazo2018} suggested treating each Dirichlet hyperparameter $\alpha_{v_i,v_j}$ as the number of phantom units arriving at the child node $v_j$ of $v_i$. Let $\alpha_0$ denote the number of phantom units entering the root node $v_0$ and weigh the edges emanating from the same node equally likely. For example, when $\alpha_0=1$, we have $\alpha_{v_0,v_1}=\alpha_{v_0,v_2} = 0.5$. We perform the analysis mentioned in Section 5 for different values of $\alpha_0$ and check the total situational error ($\varepsilon(\mathcal{T})=\sum_{v\in V_{\mathcal{T}}}||\bm{\theta}^\dagger_v-\tilde{\bm{\theta}}_v||_2$). This error metric is the sum of the Euclidean distance between the true conditional probabilities $\bm{\theta}_v^\dagger$ and the mean posterior probabilities $\tilde{\bm{\theta}}_v$ estimated on the best scoring model for all stages. The results are depicted in Figure \ref{fig:total error}, where we observe that the difference between the total situational errors for different $\alpha_0$ values is quite small. When decomposing the error for each stage, as shown in Figure \ref{fig:situational error}, the curves corresponding to different $\alpha_0$ values overlap. Thus the method is robust to the choice of hyperparameters. Other analyses or applications of the proposed method can be found in \cite{Yu2020} and \cite{Yu2022}.
\begin{figure}
    \centering
    \begin{subfigure}[b]{0.45\textwidth}
\includegraphics[width=\textwidth]{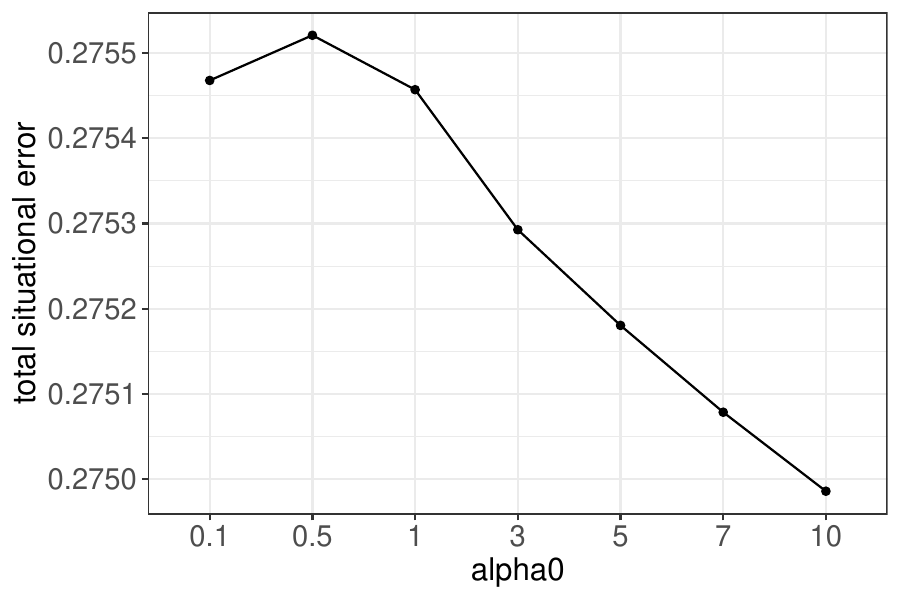}
    \caption{Comparison of total errors for different values of $\alpha_0$.}
    \label{fig:total error}
    \end{subfigure}
    \begin{subfigure}[b]{0.45\textwidth}
\includegraphics[width=\textwidth]{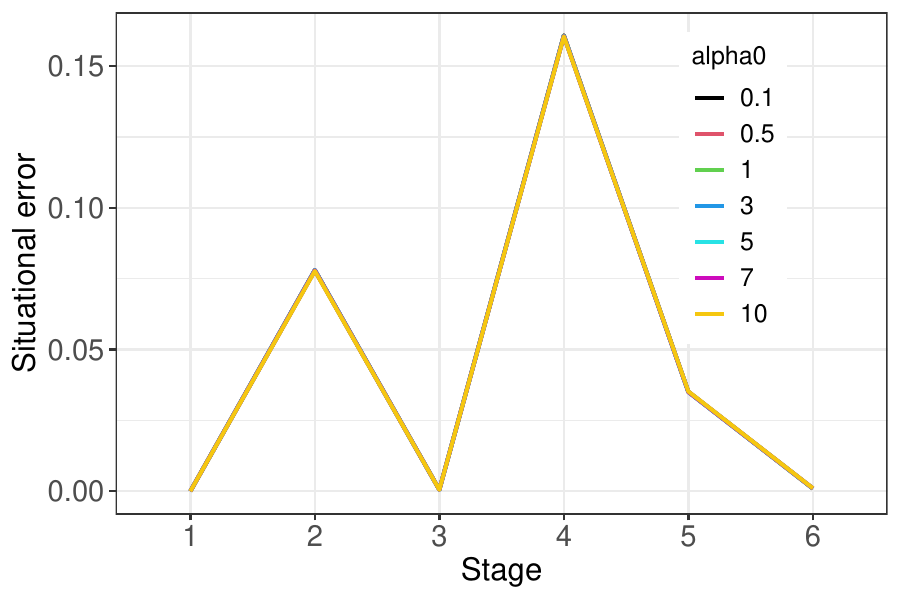}
    \caption{Comparison of situational errors for different values of $\alpha_0$.}
    \label{fig:situational error}
    \end{subfigure}
    \caption{Error plots.}
    \label{fig:error plots}
\end{figure}

\section{Interventions that improve system efficiency}\label{app:c}

Figure \ref{fig:failure life cycle} uses a bathtub curve to portray the life cycle of a unit, reflecting the change in failure rate. The perfect remedial intervention formalised in this paper corresponds to the black curve in the second life cycle. 
If the system is improved after maintenance, then the second life cycle may look like the red curve where the failure rate is reduced throughout the entire lifetime of the unit compared to the first life cycle.

Such an intervention may involve manipulation not only of the probability distribution over root causes: either upstream or downstream florets of the root causes might be manipulated, depending on the maintenance. The algebras proposed in the paper can be generalised to encode such combinatorial manipulations. Moreover, distribution of time-to-event might also be affected. On a dynamic CEG, we can model semi-Markov processes with semi-Markov kernel $Q_{w_i,w_j}(t)=\theta_{w_i,w_j} P(h_{w_i,w_j}\leq t)$  where $h_{w_i,w_j}$ denotes the holding time at $w_i$ just before transitioning to $w_j$. We can specify a parametric distribution for each conditional holding time, for example, a Weibull distribution. Then controlling the shape parameter of the Weibull distribution enables us to work on different phases of the bathtub curve.
\begin{figure}
    \centering
    \includegraphics[width=0.8\textwidth]{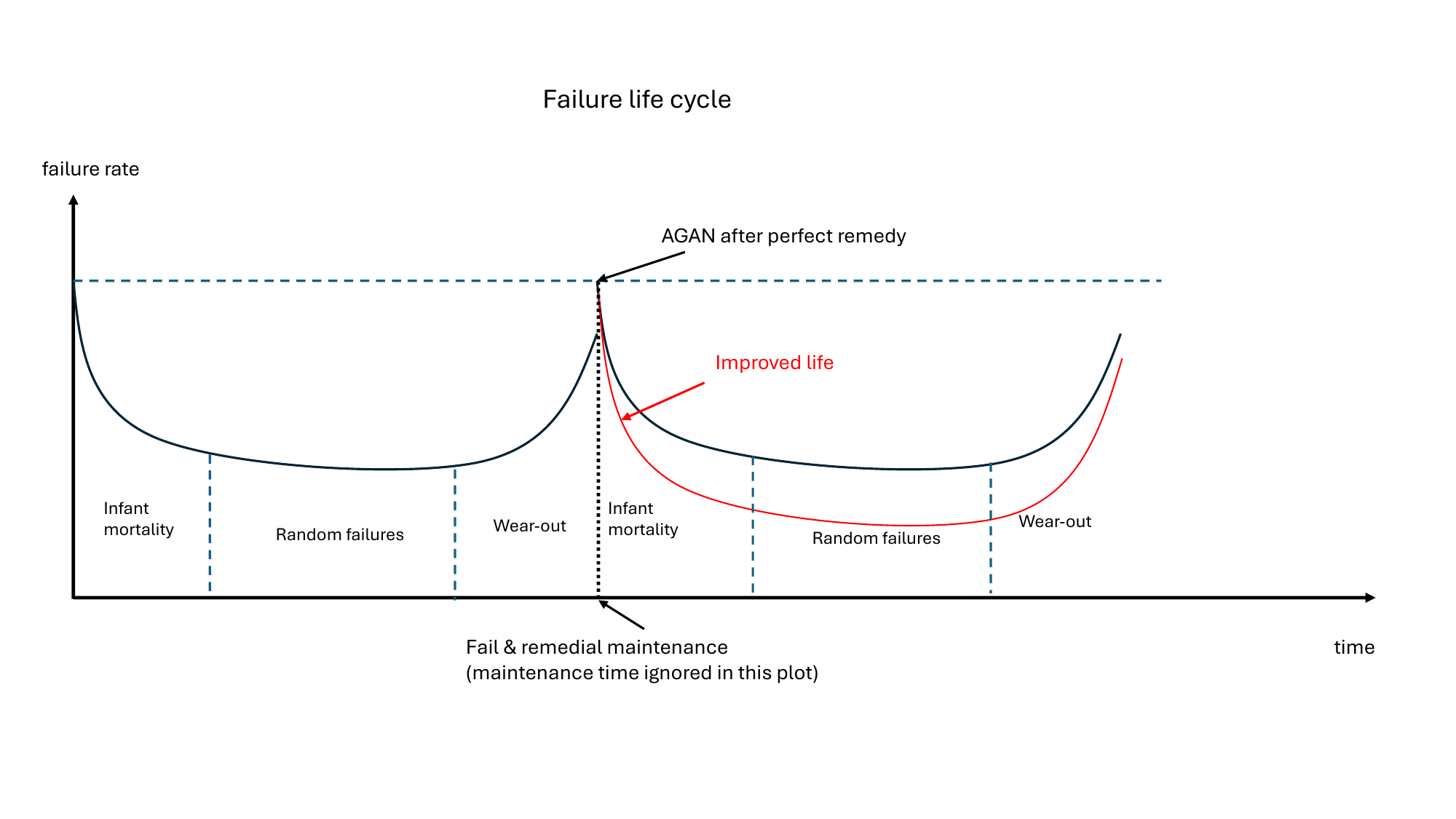}
    \caption{Plot of failure life cycle of a system.}
    \label{fig:failure life cycle}
\end{figure}

\end{appendices}

\end{document}